\def\dOi{11(1:13)2015}
\keywords{Martin-L\"of Type theory, data types, induction-recursion,
  initial-algebra semantics}
\newcommand{\ignore}[1]{}
\newcommand{\C}{\mathbb{C}}
\newcommand{\Fam}{\ensuremath{\mathsf{Fam}}}
\newcommand{\CAT}{\ensuremath{\mathsf{CAT}}}
\newcommand{\Set}{\ensuremath{\mathsf{Set}}}
\newcommand{\type}{\ensuremath{\mathsf{type}}}
\newcommand{\id}{\mathsf{id}}
\newcommand{\Dual}[1]{{#1}\ensuremath{^{\mathsf{op}}}}
\newcommand{\Id}{\mathsf{Id}}
\newcommand{\zero}{\mathsf{0}}
\newcommand{\one}{\mathsf{1}}
\newcommand{\two}{\mathsf{2}}
\newcommand{\D}{D}
\newcommand{\nattype}{\mathbb{N}}
\newcommand{\inl}{\mathsf{inl}}
\newcommand{\inr}{\mathsf{inr}}
\newcommand{\inn}{\mathsf{in}}
\newcommand{\IR}{\ensuremath{\mathsf{IR}}\xspace}
\newcommand{\pIR}{\ensuremath{\mathsf{IR}^{+}}\xspace}
\newcommand{\Hom}[1]{\ensuremath{\mathsf{Hom}_{#1}}\xspace}
\newcommand{\cont}{\ensuremath{\mathsf{Cont}}}
\newcommand{\inter}[1]{\llbracket #1 \rrbracket}
\newcommand{\interC}[1]{\llbracket #1 \rrbracket_{\mathsf{\cont}}}
\newcommand{\cons}[2]{\ensuremath{(\mathit{#1}\!\Rightarrow\!\mathit{#2})}}%
\newcommand{\Nest}{\mathsf{Nest}}
\newcommand{\Int}[1]{\llparenthesis#1\rrparenthesis}
\newcommand{\calF}{{\mathcal F}}
\newcommand{\NF}{\mathsf{NF}}
\newcommand{\pair}[2]{\langle#1,#2 \rangle}
\newcommand{\No}{\ensuremath{\mathsf{nf}}}
\newcommand{\Co}{\ensuremath{\mathsf{correct}}}
\theoremstyle{plain}
\newtheorem{theorem}[thm]{Theorem}
\newtheorem{lemma}[thm]{Lemma}
\newtheorem{proposition}[thm]{Proposition}
\theoremstyle{definition}
\newtheorem{remark}[thm]{Remark}
\newtheorem{remarks}[thm]{Remarks}
\crefname{remarks}{Remarks}{Remarks}
\newtheorem{example}[thm]{Example}
\newtheorem{definition}[thm]{Definition}
\title{Positive Inductive-Recursive Definitions}
\author[N.~Ghani]{Neil Ghani}
\address{University of Strathclyde, UK}
\email{ng@cis.strath.ac.uc, \{lorenzo.malatesta,fredrik.nordvall-forsberg\}@strath.ac.uk}
\author[L.~Malatesta]{Lorenzo Malatesta}
\author[F.~N.~Forsberg]{Fredrik Nordvall Forsberg}
\thanks{This work was supported by the Engineering and Physical Sciences Research Council [grant numbers EP/G033056/1, EP/K023837/1]}
\begin{document}

\begin{abstract}  
  A new theory of data types which allows for the definition of types
  as initial algebras of certain functors $\Fam(\C) \rightarrow
  \Fam(\C)$ is presented. This theory, which we call \emph{positive
    inductive-recursive definitions}, is a generalisation of Dybjer
  and Setzer's theory of inductive-recursive definitions within which
  $\C$ had to be discrete --- our work can therefore be seen as
  lifting this restriction.  This is a substantial endeavour as we
  need to not only introduce a type of codes for such data types (as
  in Dybjer and Setzer's work), but also a type of morphisms between
  such codes (which was not needed in Dybjer and Setzer's
  development).  We show how these codes are interpreted as functors
  on $\Fam(\C)$ and how these morphisms of codes are interpreted as
  natural transformations between such functors. We then give an
  application of positive inductive-recursive definitions to the
  theory of nested data types and we give concrete examples of
  recursive functions defined on universes by using their elimination
  principle.  Finally we justify the existence of positive
  inductive-recursive definitions by adapting Dybjer and Setzer's
  set-theoretic model to our setting.
\end{abstract}

\maketitle

\section{Introduction} 
Inductive types are the bricks of dependently typed programming
languages: they represent the building blocks on which any other type
is built.  The mortar the dependently typed programmer has at her
disposal for computation with dependent types is recursion.  Usually,
a type $A$ is defined inductively, and then terms or types can be
defined recursively over the structure of $A$.  The theory of
inductive-recursive definitions~\cite{Dyb,DybSetz1} explores the
simultaneous combination of these two basic ingredients, pushing the
limits of the theoretical foundations of data types.

The key example of an inductive-recursive definition is Martin-L\"of's
universe \`a la Tarski~\cite{ML1}.  A type $U$ consisting of codes for
small types is introduced, together with a decoding function $T$,
which maps codes to the types they denote.  The definition is both
inductive and recursive; the type $U$ is defined inductively, and the
decoding function $T$ is defined recursively on the way the elements
of $U$ are generated.  The definition needs to be simultaneous, since
the introduction rules for $U$ refer to $T$.  We illustrate this by
means of a concrete example: say we want to define a data type
representing a universe containing a name for the natural numbers,
closed under $\Sigma$-types. Such a universe will be the smallest
family of sets $(U, T)$ satisfying the following
equations \begin{equation} \label{eq:U-closed-sigma} \begin{array}{lll}
    U & = & 1\;  + \; \Sigma\, u \! :  \! U. \;Tu \to U \\
    T (\inl\; *) & = & \nattype \\
    T (\inr\; (u,f)) & = & \Sigma x \!: \! T u. \;T (f x)\\
\end{array}
\end{equation}
In this definition we see how ground types and the type constructor
$\Sigma$ are reflected in $U$.  The left summand of the right hand
side of the equation defining $U$ is a code for the type of natural
numbers, while the right summand is a code reflecting $\Sigma$-types.
Indeed the name of a the type $\Sigma\,A\,B$ for $A:\Set$, $B:A\to
\Set$ in the universe $(U,T)$ will consists of a name in $U$ for the
type $A$, i.e.\ an element $u\! :\! U$, and a function $f:Tu \to U$
representing the $A$-indexed family of sets $B$.  The decoding
function $T$ maps elements of $U$ to types according to the
description above: the code for natural numbers decodes to the set of
natural numbers $\nattype$ while an element $(u,f)$ of the right
summand decodes to the $\Sigma$-type it denotes. Other examples of
inductive-recursive definitions have also appeared in the literature,
such as e.g.\ Martin-L\"of's computability predicates~\cite{ML0} or
Aczel's Frege structures~\cite{Aczel}.  Lately the use of
inductive-recursive definitions to encode invariants in ordinary data
structures has also been considered~\cite{Stevan}.

Dybjer's~\cite{Dyb} insight was that these examples are instances of a
general notion, which Dybjer and Setzer~\cite{DybSetz1} later found a
finite axiomatisation of. Their theory of inductive-recursive
definitions $\IR$ consists of: (i) a representation of types as
initial algebras of functors; and (ii) a grammar for defining such
functors. Elements of the grammar are called \IR codes, while functors
associated to \IR codes are called \IR functors.  The theory naturally
covers simpler inductive types such as lists, trees, vectors,
red-black trees etc.\ as well.  Dybjer and Setzer \cite{DybSetz2} then
gave an initial algebra semantics for \IR codes by showing that \IR
functors are naturally defined on the category $\Fam(\D)$ of families
of elements of a (possibly large) type $\D$ and that these functors do
indeed have initial algebras.  More generally, abstracting on the
families construction and the underlying families fibration
$\pi:\Fam(\D) \to \Set$, we have recently shown how to interpret $\IR$
functors in an arbitrary fibration endowed with the appropriate
structure~\cite{GHMNS}.  In this article, we will only consider the
families fibration.

There is, however, a complication.  When interpreting \IR\ functors
such as those building universes closed under dependent products, the
mixture of covariance and contravariance intrinsic in the $\Pi$
operator forces one to confine attention to functors $\Fam\, |\C|
\rightarrow \Fam\, |\C|$ or, equivalently, to work with only those
morphisms between families which are commuting triangles. More
abstractly, as we have shown~\cite{GHMNS}, this corresponds to working
in the split cartesian fragment of the families fibration
$\pi:\Fam(\C) \to \Set$, i.e.\ to only consider those morphisms in
$\Fam(\C)$ which represent strict reindexing. In this paper we remove
this constraint and hence explore a further generalization of $\IR$,
orthogonal to the one proposed in Ghani et al.~\cite{GHMNS}.  We
investigate the necessary changes of $\IR$ needed to provide a class
of codes which can be interpreted as functors $\Fam(\C) \rightarrow
\Fam(\C)$. This leads us to consider a new variation $\pIR$ of
inductive-recursive definitions which we call \emph{positive
  inductive-recursive definitions}. The most substantial aspect of
this new theory is that in order to define these new codes, one needs
also to define the morphisms between codes. This is no handle-turning
exercise!

We first recall Dybjer and Setzer's theory of inductive-recursive
definitions (\cref{sec:IR}).  To develop the theory we then introduce
a syntax and semantics consisting of $\pIR$ codes and their morphisms,
and an explanation how these codes are interpreted as functors
$\Fam(\C) \to\Fam(\C)$, where $\C$ is an arbitrary category
(\cref{sec:pos-ind-rec}).  We then illustrate the stronger elimination
principles that are possible for positive inductive-recursive
definitions. We consider several examples of catamorphisms that are
not possible with ordinary inductive-recursive definitions
(\cref{sec:elim-principle}).  As a practical application, we use
positive inductive-recursive definitions to shed new light on nested
data types (\cref{sec:nested-types}).  We formally compare $\pIR$ with
the existing theory of $\IR$ (\cref{sec:oldIR}), and adapt Dybjer and
Setzer's model construction to our setting
(\cref{sec:initial-alg}). The material in this paper has been
formalised in Agda~\cite{agda}.

The paper uses a mixture of categorical and type theoretic
constructions. However, the reader should bear in mind that the
foundations of this paper are type theoretic. In other words, all
constructions should be understood to take place in extensional
Martin-L\"of type theory with one universe $\Set$. This is entirely
standard in the literature. The one exception is the use of a Mahlo
cardinal required to prove that positive inductive recursive functors
have initial algebras in \cref{sec:initial-alg}. It should be
emphasised that the Mahlo cardinal is only used to justify the
soundness of the theory, and does not play any computational role.  We
refer the interested reader to Dybjer and Setzer~\cite{DybSetz1} ---
they use a Mahlo cardinal for the same purpose --- for the technical
details.  We also use fibrational terminology occasionally when we
feel it adds insight, however readers not familiar with fibrations can
simply ignore such comments.

\section{Inductive-recursive definitions}
\label{sec:IR}

In increasing complexity and sophistication, inductive definitions,
indexed inductive definitions and inductive-recursive definitions
encode more and more information 
about the data structures in question into the type itself. Being
situated at the top of this hierarchy, inductive-recursive definitions
provide a unifying theoretical framework for many different forms of
data types. Indeed, both inductive and indexed inductive definitions
are simple instances of $\IR$~\cite{Malatesta}.

The original presentation of induction-recursion given by
Dybjer~\cite{Dyb} was as a schema.  Dybjer and Setzer~\cite{DybSetz1}
further developed the theory to internalize the concept of an
inductive-recursive definition.  They developed a finite
axiomatization of the theory through the introduction of a special
type of codes for inductive-recursive definitions.  The following
axiomatization, which closely follows Dybjer and
Setzer~\cite{DybSetz1}, presents the syntax of $\IR$ as an inductive
definition.

\begin{definition}[$\IR$ codes]
\label{def:IR-codes}
Let $\D$ be a (possibly large) type. The type of $\IR(\D)$ codes has
the following constructors:
\[
\infer{\iota\,d\,:\,\IR(D)}{d\,:\D}
\]
\[
\infer{\sigma_A\,f\,:\IR(\D)}{A\,:\Set & f\,:A\rightarrow \IR(\D)}
\]
\[
\infer{\delta_A\,F\,:\IR(\D)}{A\,:\Set& F\,:(A\rightarrow \D)\rightarrow \IR(\D)}
\]
\end{definition}

This is the syntax of induction-recursion --- it is quite remarkable
in our opinion that this most powerful of theories of data types can
be presented in such a simple fashion. These rules have been written
in natural-deduction style and we may use the ambient type theory to
define, for example, the function $f$ in the code $\sigma_A f$.  An
example of an $\IR$ code is given in \Cref{ex:sigma}; this code
represents the universe containing the natural numbers and closed
under $\Sigma$-types given in Equation~\eqref{eq:U-closed-sigma}. We
now turn to the semantics of induction-recursion: we interpret $\IR$
codes as functors, and to this end, we use the standard families
construction $\Fam$ from category theory.  We start recalling the
definition of the category $\Fam(\C)$ of families of objects of a
category $\C$.

\begin{definition}
  Given a category $\C$, the category $\Fam(\C)$ has objects pairs
  $(X, P)$ where $X$ is a set and $P: X \to \C$ is a functor which we
  can think of as an $X$-indexed family of objects of $\C$.  A
  morphism from $(X,P)$ to $(Y,Q)$ is a pair $(h,k)$ where $h:X\to Y$
  is a function, and $k:P\stackrel{\cdot}{\rightarrow}Q\circ h$ is a
  natural transformation.
\end{definition}
Of course, the naturality condition in the definition of a morphism of
families is vacuous as the domains of the functors in question are
discrete.
\begin{remarks}
\label{rem:Fam}
For any category $\C$, the category $\Fam(\C)$ always has rich structure:
\begin{itemize}
\item $\Fam(\C)$ is fibred over $\Set$ (see
  e.g. Jacobs~\cite{Jacobs}).  We omit the definitions here, but
  recall the standard splitting cleavage of the fibration $\pi :
  \Fam(\C) \to \Set$ which is relevant later: a morphism $(h, k) : (X,
  P) \to (Y, Q)$ is a split cartesian morphism if $k$ is a family of
  identity morphisms, i.e.\ if $P = Q \circ h$.
\item $\Fam(\C)$ is the free set indexed coproduct completion of $\C$;
  that is $\Fam(\C)$ has all set indexed coproducts and there is an
  embedding $\C\to \Fam(\C)$ universal among functors $F:\C\to
  \mathbb{D}$ where $\mathbb{D}$ is a category with set indexed
  coproducts.  Given an $A$-indexed collection of objects
  $(X_a,P_a)_{a\,:A}$ in $\Fam(\C)$, its $A$-indexed coproduct is the
  family $\sum_{a\,: A}(X_a,P_a) = (\sum_{a\,:A}X_a,[P_a]_{a\,:A})$.
\item $\Fam(\C)$ is cocomplete if and only if $\C$ has all small
  connected colimits (Carboni and Johnstone~\cite[dual of Prop.\
  2.1]{JoCa}).
\item $\Fam$ is a functor $\CAT\to\CAT$; given $F:\C\to \mathbb{D}$,
  we get a functor $\Fam(F):\Fam(\C) \to \Fam(\mathbb{D})$ by
  composition: $\Fam(F)(X,P) = (X,F\circ P)$. Here $\CAT$ is the
  category of large categories.
\end{itemize}
\end{remarks}

\noindent When $\C$ is a discrete category, a morphism between families $(X,P)$
and $(Y,Q)$ in $\Fam(\C)$ consists of a function $h:X\to Y$ such that
$P\,x=Q\,(h\,x)$ for all $x$ in $X$.  From a fibrational perspective,
this amounts to the restriction to the split cartesian fragment
$\Fam\,|\C|$ of the fibration $\pi:\Fam(\C) \to \Set$, for $\C$ an
arbitrary category.  This observation is crucial for the
interpretation of $\IR$ codes as functors.  Indeed, given a type $\D$,
which we think of as the discrete category $|\D|$ (with objects terms
of type $D$), we interpret $\IR$ codes as functors $\Fam\,|\D| \to
\Fam\,|\D|$.

\begin{theorem}[$\IR$ functors~\cite{DybSetz2}]
\label{thm:IR-functors}
Let $\D$ be a (possibly large) type. Every code $\gamma : \IR(\D)$
induces a functor
\[
\inter{\gamma}:\Fam\,|\D| \to \Fam\,|\D|
\]
\end{theorem}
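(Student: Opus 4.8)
The plan is to proceed by structural induction on the code $\gamma : \IR(\D)$, simultaneously defining the action of $\inter{\gamma}$ on objects and on morphisms and checking the functor laws. Throughout I write a family as a pair $(X, P)$ with $X$ a set and $P : X \to \D$, and I use the observation recalled just above that, since $|\D|$ is discrete, a morphism $(X,P) \to (Y,Q)$ in $\Fam\,|\D|$ is nothing but a function $h : X \to Y$ satisfying the strictness equation $P = Q \circ h$; this equation will do essentially all of the real work in the inductive step. I would also keep an eye on sizes: $\D$ may be large, so $\Fam\,|\D|$ is a large category, but the index sets are honest sets, and each clause below builds a new index set from $A$, $X$ and (structurally smaller) index sets using only $\Sigma$ and function spaces, hence again a set; so the constructions stay within $\Fam\,|\D|$.

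For the base case $\gamma = \iota\,d$ I let $\inter{\iota\,d}$ be the constant functor at the singleton family $(1, \lambda x.\, d)$, sending every morphism to $\id$; the functor laws are trivial. For $\gamma = \sigma_A\,f$ the induction hypothesis supplies functors $\inter{f\,a} : \Fam\,|\D| \to \Fam\,|\D|$ for each $a : A$, and I set $\inter{\sigma_A\,f}(X,P) = \sum_{a : A}\inter{f\,a}(X,P)$, the $A$-indexed coproduct in $\Fam\,|\D|$, which exists by \cref{rem:Fam} (this also supplies the decoding component, which is the copairing of the decoding functions of the summands); on a morphism $h$ the map $\inter{\sigma_A\,f}(h)$ acts as $\inter{f\,a}(h)$ in the $a$-th component, and preservation of identities and of composition follows componentwise from the corresponding properties of the $\inter{f\,a}$ together with the universal property of the coproduct. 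The interesting case is $\gamma = \delta_A\,F$, where I define $\inter{\delta_A\,F}(X,P) = \sum_{g : A \to X}\inter{F(P\circ g)}(X,P)$ — here $P \circ g : A \to \D$, so $F(P\circ g) : \IR(\D)$ and the induction hypothesis applies — and on a morphism $h : (X,P) \to (Y,Q)$ the map $\inter{\delta_A\,F}(h)$ should send an element $(g, z)$, with $g : A \to X$ and $z$ an argument of $\inter{F(P\circ g)}(X,P)$, to $(h \circ g, \inter{F(P\circ g)}(h)(z))$.

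The step that needs care — and where I expect the only real subtlety to lie — is checking that this last clause is well-typed: $\inter{F(P\circ g)}(h)(z)$ is an argument of $\inter{F(P\circ g)}(Y,Q)$, whereas for $(h\circ g, \inter{F(P\circ g)}(h)(z))$ to belong to $\inter{\delta_A\,F}(Y,Q) = \sum_{g' : A \to Y}\inter{F(Q\circ g')}(Y,Q)$ at index $h\circ g$ it must be an argument of $\inter{F(Q\circ h\circ g)}(Y,Q)$. These two are literally the same because the strictness equation $P = Q \circ h$ forces $F(P\circ g) = F(Q\circ h\circ g)$ as codes; this is precisely why Dybjer and Setzer confine attention to the split cartesian fragment $\Fam\,|\D|$ rather than all of $\Fam(\D)$, and precisely the restriction that the remainder of this paper is devoted to removing. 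With well-typedness settled, preservation of identities and of composition in the $\delta$ case is another routine induction, the composition case once more appealing to $P = Q \circ h$ (for both morphisms being composed) to identify the relevant codes; combining the three cases finishes the induction.
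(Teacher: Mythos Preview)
Your proposal is correct and follows essentially the same approach as the paper: the object action, the morphism action, and in particular the observation that the $\delta$ clause type-checks only because the strictness equation $P = Q \circ h$ makes $F(P\circ g)$ and $F(Q\circ h\circ g)$ literally equal codes, all match the paper's proof (your element-style description $(g,z)\mapsto (h\circ g,\inter{F(P\circ g)}(h)(z))$ is just the cotupling $[\inn_{h\circ g}\circ\inter{F(Q\circ h\circ g)}(h,\id)]_{g}$ unpacked). Your additional remarks on size and on the functor laws are correct but not spelled out in the paper's proof.
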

\begin{proof}
  We define $\inter{\gamma}:\Fam\,|\D| \to \Fam\,|\D|$ by induction on
  the structure of the code $\gamma$. We first give the action on
  objects:
\begin{align*}
\inter{\iota\,c} (X, P) &= (\one,\lambda\_\,.\,c) \\
\inter{\sigma_A\,f}(X, P)& = \sum_{a\,: A}\inter{f\,a}(X,P) \\
\inter{\delta_A\,F}(X, P)&=\sum_{g\,:A\to X} \inter{F\,(P\circ g)}(X,P)
\end{align*}

We now give the action on morphisms.  Let $(h, \id) : (X, P) \to (Y,
Q)$ be a morphism in $\Fam\,|\D|$, i.e.\ $h : X \to Y$ and $Q \circ h
= P$.
\begin{align*}
\inter{\iota\,c} (h, \id) &= (\id_{\one},\id)\\
\inter{\sigma_A\,f}(h, \id)& = [\inn_a\circ\inter{f\,a}(h, \id)]_{a\,: A}\\
\inter{\delta_A\,F}(h, \id)&=[\inn_{h\circ g}\circ\inter{F(Q \circ h \circ g)}(h, \id)]_{g\,:A\to X}
\end{align*}
Here, the last line type checks $Q \circ h = P$ since $\D$ is discrete. Hence 
\begin{equation}\label{eq:on-the-nose}
Q \circ h \circ g = P \circ g\end{equation}
 and we can apply the induction hypothesis.
\end{proof}

Note how the interpretation of both $\sigma$ and $\delta$ codes makes
essential use of coproducts of families as defined in \cref{rem:Fam}.
In particular, the interpretation of a code $\delta_A F$ uses as the
coproduct's index set the function space $A\!\to\! X$, which is a set
since both $A$ and $X$~are.
 
Ghani et al.~\cite{Malatesta} introduces morphisms between (small)
$\IR$ codes.  The morphisms are chosen to make the interpretation
function $\inter{-}:\IR(\D) \to (\Fam\,|\D| \to \Fam\,|\D|)$ full and
faithful. Thus, transporting composition and identity along this
function makes $\IR(\D)$ into a category, and $\inter{-}:\IR(\D) \to
(\Fam\,|\D| \to \Fam\,|\D|)$ can really be seen as a full and faithful
functor.  We will draw inspiration from this in \cref{sec:pos-ind-rec}
when we generalise the semantics to endofunctors on $\Fam(\C)$ for
possibly non-discrete categories $\C$. Note however that the
definition of morphisms between codes we give here differs from the
one appearing in Ghani et al.~\cite{Malatesta}. The key idea of the
latter is a characterization of the interpretation of $\delta$ codes
as left Kan extensions. In our more general setting where $\C$ can be
a non-discrete category, this characterization fails. As a
consequence, we lose the full and faithfulness of the interpretation
functor $\inter{-}$ and we have to prove by hand that the set of codes
and morphisms between them actually is a category. Full and
faithfulness of the interpretation is convenient and desirable, and
often simplifies calculations. Nonetheless, it is not an essential
property, and we manage to make do without it.

We call a data type \emph{inductive-recursive} if it is the initial
algebra of a functor induced from an $\IR$ code. Let us look at some
examples.

\begin{example}[A universe closed under dependent sums]
\label{ex:sigma}
In the introduction, we introduced a universe in
Equation~\eqref{eq:U-closed-sigma}, containing the natural numbers and
closed under $\Sigma$-types, and claimed that this universe can be
defined via an inductive-recursive definition.  Indeed, one can easily
write down a code $\gamma_{\nattype,\Sigma} :\IR(\Set)$ for a functor
that will have such a universe as its initial algebra:
\[
\gamma_{\nattype,\Sigma} \coloneqq \iota\,\nattype +_{\IR} \delta_{\one}(X \mapsto \delta_{X\ast}(Y \mapsto \iota\,\Sigma (X\ast)\,Y))\,: \IR(\Set)
\]
Here we have used $\gamma +_{\IR} \gamma' \coloneqq \sigma_{\two}\,(0
\mapsto \gamma ; 1 \mapsto \gamma')$ to encode a binary coproduct as a
$\two$-indexed coproduct. Also, in the above, note that $X:\one
\rightarrow \Set$ and so $X \ast$ is simply the application of $X$ to
the canonical element of $\one$. If we decode
$\gamma_{\nattype,\Sigma}$, we get a functor which satisfies
\begin{align*}
       \inter{\gamma_{\nattype,\Sigma}}(U, T)
&\cong (\one + \Sigma u\!:\!U\,.\,T(u) \to U, \inl \_ \mapsto \nattype ; \inr(u, f) \mapsto \Sigma\,x\!:\!T(u)\,.\,T(f(x)))
\end{align*}
so that the initial algebra $(U, T)$ of
$\inter{\gamma_{\nattype,\Sigma}}$, which satisfies $(U, T) \cong
\inter{\gamma_{\nattype,\Sigma}}(U, T)$ by Lambek's Lemma, indeed
satisfies Equation \eqref{eq:U-closed-sigma}.
\end{example}

\begin{example}[A universe closed under dependent function spaces]
\label{ex:pi}
In the same way, we can easily write a down a code for a universe
closed under $\Pi$-types:
\[
\gamma_{\nattype,\Pi} \coloneqq \iota\,\nattype +_{\IR} \delta_{\one}(X \mapsto \delta_{X\ast}(Y \mapsto \iota\,\Pi (X\ast)\,Y))\,: \IR(\Set)
\]
Even though this looks extremely similar to the code in the previous
example, we will see in the next section that there is a big semantic
difference between them.

\end{example}

\section{Positive Inductive-Recursive Definitions}
\label{sec:pos-ind-rec}

\Cref{thm:IR-functors} tells us that $\IR$ codes can be interpreted as
functors on families built over a discrete category. What happens if
we try to interpret $\IR$ codes on the category $\Fam(\C)$, and not
just on the subcategory $\Fam\,|\C|$, whose morphisms are the split
cartesian ones only? Consider the following morphism in $\Fam\,|\C|$:
\[
\xymatrix{
X \ar[dr]_-{P} \ar[rr]^-{h} & & Y \ar[dl]^-{Q} \\
& \C
}
\]
What if the diagram above does not commute on the nose, since $\C$ is
not simply a discrete category, but a category whose intrinsic
structure we want to keep track of? For instance, it is natural to
require that the diagram above only commutes up to isomorphism, i.e.\
$P(x) \cong Q(h(x))$ instead of $P(x) = Q(h(x))$.  What structure is
required to interpret inductive-recursive definitions in this larger
category? The problem is that if we allow for more general morphisms,
we can not prove functoriality of the semantics of a $\delta$ code as
it stands anymore: it is essential to have an actual equality on the
second component of a morphism in $\Fam(\C)$ in order to have a sound
semantics.

In this section we propose a new axiomatization which enables us to
solve this problem. This new theory, which we dub \emph{positive
  inductive-recursive definitions}, abbreviated $\pIR$, represents a
generalization of $\IR$ which allows the interpretation of codes as
functors defined on $\Fam(\C)$ for an arbitrary category $\C$. In
particular, if we choose $\C$ to be a groupoid, i.e.\ a category where
every morphism is an isomorphism, we get triangles commuting up to
isomorphism as morphisms in $\Fam(\C)$.

\subsection{Syntax and Semantics of $\pIR(\C)$}

The crucial insight which guides us when introducing the syntax of
$\pIR$ is to deploy proper functors in the introduction rule of a
$\delta$ code. This enables us to remove the restriction on morphisms
within inductive recursive definitions; indeed, if we know that $F :
(A \to \C)\to \pIR(\C)$ is a \emph{functor}, and not just a
\emph{function}, we do not have to rely on an identity in
\Cref{eq:on-the-nose}, but we can use the second component of a
morphism $(h,k)\,:(X,P)\to (Y,Q)$ in $\Fam(\C)$ to get a map $P\circ g
\to Q\circ h\circ g$; then we can use the fact that $F$ is a functor
to get a morphism between codes $F(P\circ g) \to F(Q\circ h\circ g)$.

But, now we have to roll up our sleeves. For $F: (A \to \C)\to
\pIR(\C)$ to be a functor, we need both $A \to \C$ and $\pIR(\C)$ to
be categories. While it is clear how to make $A \to \C$ a category,
turning $\pIR(\C)$ into a category entails defining both codes and
morphisms between codes simultaneously, in an inductive-inductive
fashion~\cite{NordvallForsberg,Fredthesis}.  We give an axiomatic
presentation of $\pIR$ analogously to the one given in \cref{sec:IR}
for the syntax of $\IR$; however we now have mutual introduction rules
to build both the type of $\pIR(\C)$ codes and the type of $\pIR(\C)$
morphisms, for $\C$ a given category.  The semantics we give then
explains how $\pIR(\C)$ codes can be interpreted as functors on
$\Fam(\C)$, while $\pIR(\C)$ morphisms between such codes can be
interpreted as natural transformations.

\begin{definition}\label{def:pos-IR-codes}
Given a category $\C$ we simultaneously define the type $\pIR(\C)$
of positive inductive-recursive codes on $\C$, and the type of
morphisms between these codes $\Hom{\pIR(\C)}(\_,\_):\pIR(\C)\to
\pIR(\C)\to \type$ as follows:

\begin{itemize}
\item{$\pIR(\C)$ codes:}
\[
\infer{\iota\, c : \pIR(\C)}{c:\C}
\]

\[
\infer{\sigma_{A} f : \pIR(\C)}{A\,:\Set & \quad f\,:A\rightarrow \pIR(\C)}
\]

\[
\infer{\delta_{A} F : \pIR(\C)}{A\,: \Set & \quad F:(A\rightarrow \C)\rightarrow \pIR(\C) 
}
\]

\item{$\pIR(\C)$ morphisms:}
\[
\infer{\cons{\iota}{\iota}(f):\Hom{\pIR(\C)}(\iota\, c,\iota\, c')}{f:\Hom{\C}(c,c')}
\] 

 \[
 \infer{\cons{\sigma}{\sigma}(\alpha, \rho):\Hom{\pIR(\C)}(\sigma_{A}\,f,\sigma_{B}\,g)}{\alpha : A \to B &\quad \rho : \prod_{x:A}\Hom{\pIR(\C)}(f(x),g(\alpha(x)))}
 \]

\[
  \infer{\cons{\delta}{\delta}(\alpha,\rho):\Hom{\pIR(\C)}(\delta_A F,\delta_B G)}{\alpha:B\to A &\quad\rho: \mathsf{Nat}(F,G( - \circ \alpha))}
\]

\end{itemize}

\noindent In the last clause, we have indicated with
$\mathsf{Nat}(F,G( - \circ \alpha))$ the collection of natural
transformations between the functors $F$ and $G( - \circ \alpha) : (A
\to \C) \to \pIR(\C)$. Note also the contravariant twist in the type
of $\alpha : B \to A$ in this clause.
\end{definition} 

We need to make sure that \Cref{def:pos-IR-codes} really defines a
category, i.e. that composition of $\pIR$ morphisms can be
defined, and that it is associative and has identities.
This can be proved by recursion on the structure of morphisms:

\begin{lemma}
\label{lem:Pos-IR-is-a-cat}
Let $\C$ be a category. Then $\pIR(\C)$ is a category with morphisms
given by $\Hom{\pIR(\C)}$.
\end{lemma}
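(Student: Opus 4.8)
The plan is to define the categorical structure on $\pIR(\C)$ by recursion on the structure of codes and morphisms, following the three-constructor pattern ($\iota$, $\sigma$, $\delta$) that pervades the definitions, and then verify the category axioms --- associativity of composition and the unit laws --- again by structural recursion. Concretely, I would first define the identity morphism $\id_\gamma : \Hom{\pIR(\C)}(\gamma,\gamma)$ by cases on $\gamma$: $\id_{\iota c} = \cons{\iota}{\iota}(\id_c)$; $\id_{\sigma_A f} = \cons{\sigma}{\sigma}(\id_A, \lambda x.\,\id_{f(x)})$; and $\id_{\delta_A F} = \cons{\delta}{\delta}(\id_A, \id_F)$, where $\id_F$ is the identity natural transformation on $F$ (noting that $G(- \circ \id_A) = G$ when $G = F$, so the types match up). Each clause recursively invokes the identity on structurally smaller codes, so the definition is well-founded.

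Next I would define composition. Given $\phi : \Hom{\pIR(\C)}(\gamma_1,\gamma_2)$ and $\psi : \Hom{\pIR(\C)}(\gamma_2,\gamma_3)$, I proceed by simultaneous cases on $\phi$ and $\psi$ --- since both must be built from the same constructor as dictated by the shared middle code $\gamma_2$, only the three diagonal cases arise. For $\iota$: compose the underlying $\C$-morphisms. For $\sigma$: given $\cons{\sigma}{\sigma}(\alpha,\rho)$ and $\cons{\sigma}{\sigma}(\beta,\tau)$, return $\cons{\sigma}{\sigma}(\beta \circ \alpha,\ \lambda x.\, \tau(\alpha(x)) \circ \rho(x))$, where the inner composite is taken recursively in $\pIR(\C)$. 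For $\delta$: given $\cons{\delta}{\delta}(\alpha,\rho) : \Hom{}(\delta_A F, \delta_B G)$ with $\alpha : B \to A$, $\rho : \Nat(F, G(-\circ\alpha))$, and $\cons{\delta}{\delta}(\beta,\tau) : \Hom{}(\delta_B G, \delta_C H)$ with $\beta : C \to B$, $\tau : \Nat(G, H(-\circ\beta))$, return $\cons{\delta}{\delta}(\alpha\circ\beta,\ \sigma)$ where the composite natural transformation $\sigma : \Nat(F, H(-\circ(\alpha\circ\beta)))$ is obtained by whiskering: at a component $P : A \to \C$, take $\rho_P$ followed by $\tau_{P\circ\alpha}$, using that $(H(-\circ\beta))(P\circ\alpha) = H(P\circ\alpha\circ\beta) = H(P\circ(\alpha\circ\beta))$. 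Here the components $\rho_P$ and $\tau_{P\circ\alpha}$ are themselves $\pIR(\C)$-morphisms, so their composite is again defined by the very recursion we are carrying out.

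Finally I would verify the three axioms. Associativity $(\chi\circ\psi)\circ\phi = \chi\circ(\psi\circ\phi)$ and the two unit laws $\id\circ\phi = \phi$, $\phi\circ\id = \phi$ each reduce, after unfolding the constructor-by-constructor definitions above, to: associativity/unit laws in $\C$ (the $\iota$ case); associativity/unit laws for functions composed with a pointwise family of $\pIR(\C)$-composites (the $\sigma$ case); and associativity/unit laws for the contravariant reindexing $\alpha \mapsto (-\circ\alpha)$ together with the induction hypothesis on natural-transformation components (the $\delta$ case). In each case the proof appeals to the induction hypothesis on strictly smaller codes plus the corresponding axiom for the ambient structure ($\C$, $\Set$, functor categories).

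The main obstacle I anticipate is purely bookkeeping rather than conceptual: making the type dependencies line up in the $\delta$ case. The composite of $\rho : \Nat(F, G(-\circ\alpha))$ and $\tau : \Nat(G, H(-\circ\beta))$ must land in $\Nat(F, H(-\circ(\alpha\circ\beta)))$, and this requires the definitional equality $(G(-\circ\alpha))$ applied appropriately to agree with instances of $H(-\circ\beta)$ reindexed along $\alpha$ --- i.e.\ that precomposition is (strictly) functorial, $(-\circ\beta)\circ(-\circ\alpha) = -\circ(\alpha\circ\beta)$. In extensional type theory this holds on the nose, so no coercions are needed, but one must be careful that the naturality squares for $\rho$ and $\tau$ (which are equalities between $\pIR(\C)$-morphisms, hence themselves proved by the structural recursion) compose correctly. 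Everything else --- the $\iota$ and $\sigma$ cases, and all the axiom verifications --- is a routine induction, and since the formalisation has been carried out in Agda the coherence of these dependencies has in effect already been machine-checked.
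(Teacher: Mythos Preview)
Your proposal is correct and follows essentially the same route as the paper: define identities and composition by structural recursion on the three constructor shapes, then verify associativity and the unit laws by three further straightforward inductions. Your formulas agree with the paper's (modulo variable naming and your slightly more abstract phrasing of the $\delta$ case in terms of whiskering natural transformations), and your observation that only the three diagonal cases arise in composition is exactly what makes the recursion go through.
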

\begin{proof}
  We define $\id^{+}_{x} : \Hom{\pIR(\C)}(x, x)$ by recursion on $x$:
  \begin{align*}
    \id^{+}_{\iota c}    &= \cons{\iota}{\iota}(\id_{c}) \\
    \id^{+}_{\sigma_A\,f} &= \cons{\sigma}{\sigma}(\id_A, \lambda\,a\,.\, \id^{+}_{f(a)}) \\
    \id^{+}_{\delta_A F}  &= \cons{\delta}{\delta}(\id_A, \lambda\,h\,.\,  \id^{+}_{F(h)})
  \end{align*}
  Composition $\_\circ_{\pIR}\_ : \Hom{\pIR(\C)}(y, z) \to
  \Hom{\pIR(\C)}(x, y) \to \Hom{\pIR(\C)}(x, z)$ is defined by
  recursion on $f : \Hom{\pIR(\C)}(y, z)$ and $g : \Hom{\pIR(\C)}(x,
  y)$:
  \begin{align*}
    \cons{\iota}{\iota}(f) \circ_{\pIR} \cons{\iota}{\iota}(g) &= \cons{\iota}{\iota}(f \circ_{\C} g) \\
    \cons{\sigma}{\sigma}(\alpha, \rho) \circ_{\pIR} \cons{\sigma}{\sigma}(\beta, \tau) &= \cons{\sigma}{\sigma}(\alpha \circ \beta, \lambda\,x\,.\,\rho(\alpha(x)) \circ_{\pIR} \tau(x)) \\
    \cons{\delta}{\delta}(\alpha, \rho) \circ_{\pIR} \cons{\delta}{\delta}(\beta, \tau) &= \cons{\delta}{\delta}(\beta \circ \alpha, \lambda\,h\,.\,\rho(h \circ \beta) \circ_{\pIR} \tau(h))
  \end{align*}
  Three more straightforward inductions prove that composition is
  associative, and that $\id^{+}$ is both a left and a right unit for
  composition.
\end{proof}

We now explain how each code $\gamma : \pIR(\C)$ is interpreted as an
endofunctor
\[
\inter{\gamma}:\Fam(\C) \to \Fam(\C)
\]
We call a functor which is isomorphic to a functor induced by an
$\pIR$ code an $\pIR$ functor. The semantics of $\pIR$ closely follows
the one given in \Cref{sec:IR}; as before we make essential use of
coproducts in $\Fam(\C)$. Having said that, the crucial feature which
separates the semantics of $\pIR$ from the semantics of $\IR$ is the
following: when explaining the semantics of $\IR$, we first interpret
$\IR$ codes as functors and then later define morphisms between codes.
We can then interpret the morphisms as natural transformations between
the corresponding functors. In $\pIR$, the type of codes and the type
of morphisms between codes are simultaneously defined in an
inductive-inductive way, and therefore they are also decoded
simultaneously as functors and natural transformations
respectively. This is exactly what the elimination principle for an
inductive-inductive definition gives.

In the following theorem, note that there is no restriction on the
category $\C$ --- all structure that we need comes for free from the
families construction $\Fam$.

\begin{theorem}[$\pIR$ functors]
\label{thm:p-IR-functor}
Let $\C$ be an arbitrary category.
\begin{enumerate}[label=(\roman*)]
\item Every code $\gamma\,:\pIR(\C)$ induces a functor
  $\inter{\gamma}:\Fam(\C) \to \Fam(\C)$.
\item Every morphism $\rho:\pIR(\C)(\gamma, \gamma')$ for codes
  $\gamma,\gamma'\,:\pIR(\C)$ gives rise to a natural transformation
  $\inter{\rho}:\inter{\gamma}\stackrel{\cdot}{\longrightarrow}\inter{\gamma'}$.
\end{enumerate}
\end{theorem}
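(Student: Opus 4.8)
The plan is to prove both clauses simultaneously by induction on the inductive-inductive definition of $\pIR(\C)$, exploiting the fact that codes and morphisms were defined mutually (Definition~\ref{def:pos-IR-codes}) and so come with a mutual elimination principle. Clause (i) will reuse the object-part of the interpretation from \Cref{thm:IR-functors} essentially verbatim, since the formulas for $\inter{\iota c}$, $\inter{\sigma_A f}$ and $\inter{\delta_A F}$ on objects refer only to the index sets and the composites $P \circ g$, not to the morphism structure of $\C$. The genuinely new content is the action on a general morphism $(h,k) : (X,P) \to (Y,Q)$ of $\Fam(\C)$, where now $k : P \stackrel{\cdot}{\to} Q \circ h$ need not be the identity. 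I would set $\inter{\iota c}(h,k) = (\id_\one, \id)$; for $\sigma_A f$, coprod-map componentwise via the induction hypothesis, $\inter{\sigma_A f}(h,k) = [\inn_a \circ \inter{f\,a}(h,k)]_{a:A}$; and for $\delta_A F$, on the summand indexed by $g : A \to X$, land in the summand indexed by $h \circ g : A \to Y$, but now the code has changed from $F(P\circ g)$ to $F(Q \circ h \circ g)$, and these are no longer equal. This is exactly the point flagged in the prose before the theorem: instead of \Cref{eq:on-the-nose} we use $k$ to build a natural transformation $P \circ g \stackrel{\cdot}{\to} Q \circ h \circ g$ (namely $k$ reindexed along $g$), then apply the \emph{functor} $F$ to obtain a code morphism $F(P \circ g) \to F(Q \circ h \circ g)$ in $\pIR(\C)$, and finally feed that code morphism to clause (ii) of the induction hypothesis to get a natural transformation $\inter{F(P\circ g)} \stackrel{\cdot}{\to} \inter{F(Q \circ h \circ g)}$, whose component at $(X,P)$ is the map we want (post-composed with $\inn_{h\circ g}$).

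For clause (ii), a morphism $\rho : \Hom{\pIR(\C)}(\gamma,\gamma')$ is one of $\cons{\iota}{\iota}(f)$, $\cons{\sigma}{\sigma}(\alpha,\rho')$, or $\cons{\delta}{\delta}(\alpha,\rho')$, and I would define $\inter{\rho}$ by recursion following the same three cases. For $\cons{\iota}{\iota}(f)$ with $f : \Hom{\C}(c,c')$, the component at $(X,P)$ is $(\id_\one, \lambda\_.\,f) : (\one, \lambda\_.\,c) \to (\one, \lambda\_.\,c')$. For $\cons{\sigma}{\sigma}(\alpha,\rho')$ the component is the coproduct-indexing map induced by $\alpha : A \to B$ together with, on each summand, the component of $\inter{\rho'(x)}$ supplied by the induction hypothesis. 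For $\cons{\delta}{\delta}(\alpha,\rho')$ with $\alpha : B \to A$ and $\rho' : \Nat(F, G(-\circ\alpha))$, at $(X,P)$ I reindex summands along precomposition-with-$\alpha$ on the index set $A\to X \to (B\to X)$, and on the summand at $g$ use the component of $\inter{\rho'_{P\circ g}}$ (note $\rho'_{P\circ g} : F(P\circ g) \to G(P \circ g \circ \alpha)$) from the induction hypothesis; one must check the target summand is indexed correctly, which it is because $Q' = G(P\circ g\circ \alpha)$ decodes via the $\delta$-clause to summands indexed by $B \to X$.

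After the definitions, the remaining work is a battery of routine verifications, all provable by induction on the structure of the relevant code or morphism using the $\pIR$-eliminator: (a) $\inter{\gamma}$ preserves identities and composition of $\Fam(\C)$-morphisms, so it really is a functor --- here the $\delta$ case needs that $F$ applied to identity natural transformations gives identity code morphisms, and $F$ applied to a composite gives the composite, i.e. the functoriality of $F$, together with clause (ii)'s compatibility with $\id^+$ and $\circ_{\pIR}$ (which themselves should be established as a lemma, or folded in); (b) each $\inter{\rho}$ is natural in $(X,P)$; and (c) $\inter{-}$ on morphisms respects $\id^+$ and $\circ_{\pIR}$ as defined in \Cref{lem:Pos-IR-is-a-cat}, so that $\inter{-}$ is a functor $\pIR(\C) \to [\Fam(\C),\Fam(\C)]$, although the theorem statement only asks for (i) and (ii) it is natural to record this. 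The main obstacle is bookkeeping in the $\delta$ cases: keeping straight the contravariant twist in $\alpha$, the distinction between the code $F(P\circ g)$ and $F(Q\circ h\circ g)$, and the fact that we must invoke \emph{both} halves of the induction hypothesis (the code-morphism half to feed into the functor-interpretation half) in a single clause --- this is precisely where the inductive-inductive elimination principle is indispensable, since an ordinary induction on codes alone would not give us the interpretation of the code morphism $F(P\circ g) \to F(Q\circ h\circ g)$ that functoriality of $\inter{\delta_A F}$ requires.
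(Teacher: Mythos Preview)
Your approach is the paper's approach: reuse the object action from \Cref{thm:IR-functors}, define the morphism action and the interpretation of code morphisms simultaneously by the inductive--inductive eliminator, and in the $\delta$-case exploit functoriality of $F$ together with clause~(ii) to pass from $F(P\circ g)$ to $F(Q\circ h\circ g)$. Your treatment of clause~(ii) matches the paper's formulas exactly.

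There is, however, one concrete omission in your recipe for $\inter{\delta_A F}(h,k)$. The component $\inter{F_\to(g^*(k))}_{(X,P)}$ that you build is a map
\[
\inter{F(P\circ g)}(X,P)\;\longrightarrow\;\inter{F(Q\circ h\circ g)}(X,P),
\]
and post-composing this directly with $\inn_{h\circ g}$ does not typecheck: the injection $\inn_{h\circ g}$ has source $\inter{F(Q\circ h\circ g)}(Y,Q)$, since it lands in $\inter{\delta_A F}(Y,Q)=\sum_{g':A\to Y}\inter{F(Q\circ g')}(Y,Q)$. You still need to move the \emph{argument} from $(X,P)$ to $(Y,Q)$, and that is supplied by the recursive clause~(i) call $\inter{F(Q\circ h\circ g)}(h,k)$. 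The correct composite (as in the paper) is
\[
\inter{\delta_A F}(h,k)\;=\;\bigl[\,\inn_{h\circ g}\circ\inter{F(Q\circ h\circ g)}(h,k)\circ\inter{F_\to(g^*(k))}_{(X,P)}\,\bigr]_{g:A\to X}.
\]
You gesture at this when you later say that both halves of the induction hypothesis must be invoked in the $\delta$ clause, but your explicit description only invokes the clause~(ii) half; once you insert the missing $\inter{F(Q\circ h\circ g)}(h,k)$, everything else lines up with the paper.
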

\proof
  While the action on objects is the same for both $\pIR$ and $\IR$
  functors, the action on morphisms is different when interpreting a
  code of type $\delta_A F$: in the semantics of $\pIR$ we exploit the
  fact that $F:(A\rightarrow \C)\rightarrow \pIR(\C)$ is now a
  functor, so that it also has an action on morphisms (which we, for
  the sake of clarity, write $F_{\to}$). We give the action of $\pIR$
  functors on morphisms only, and refer to the semantics given in
  \cref{thm:IR-functors} for the action on objects of $\Fam(\C)$.

  The action on morphisms is given as follows.  Let $(h, k) : (X,P)
  \to (Y,Q)$ in $\Fam(\C)$.  We define $\inter{\gamma}(h, k) :
  \inter{\gamma}(X, P) \to \inter{\gamma}(Y, Q)$ by recursion on
  $\gamma$:
\begin{align*}
\inter{\iota\, c} (h, k) &= (\id_{\one},\id_c)\\
\inter{\sigma_A f}(h, k)& = [\inn_a\circ\inter{f\,a}(h, k)]_{a\,: A}\\
\inter{\delta_A F}(h, k)&=[\inn_{h\circ g}\circ\inter{F(Q\circ h\circ g)}(h, k)\circ \inter{F_{\to}(g^{*}(k))}_{(X,P)}]_{g\,:A\to X}
\end{align*}
In the last clause $g^{*}(k) \,:P\circ
g\stackrel{\cdot}{\longrightarrow} Q\circ h\circ g$ is the natural
transformation with component $g^{*}(k)_{a} = k_{g\,a} : P(g\,a) \to
Q(k(g\,a))$; note that such a natural transformation is nothing but
the vertical morphism above $A$ obtained by reindexing $(\id_X,k)$
along $g$ in the families fibration $\pi : \Fam(\C) \to \Set$.

We now explain how an $\pIR$ morphism $\rho : \gamma \to \gamma'$ is
interpreted as a natural transformation $\inter{\rho} : \inter{\gamma}
\stackrel{\cdot}{\longrightarrow} \inter{\gamma'}$ between $\pIR$
functors by specifying the component $\inter{\rho}_{(X, P)}$ at
$(X,P):\Fam(\C)$. Naturality of these transformations can be proved by
a routine diagram chase.
\begin{align*}
\inter{\cons{\iota}{\iota}(f)}_{(X,P)}        & =(\id_\one, f) \\
\inter{\cons{\sigma}{\sigma}(\alpha, \rho)}_{(X,P)}    & = [\inn_{\alpha(x)} \circ \inter{\rho(x)}_{(X,P)}]_{x\,:A}                           \\
\inter{\cons{\delta}{\delta}(\alpha,\rho)}_{(X,P)} & =[\inn_{g\circ
    \alpha}\circ \inter{\rho_{(P\circ g)}}_{(X,P)}]_{g:A\to
  X}\rlap{\hbox to 98 pt{\hfill\qEd}}
\end{align*}

\begin{remark}
  In the conference version of this paper~\cite{calco}, we considered
  a different collection of morphisms; since more morphisms makes it
  easier to define codes, we tried to include as many morphisms as
  possible. As a result, the proof that $\pIR(\C)$ is a category
  becomes quite long and tedious, although straightforward. In this
  presentation, we have instead decided to restrict ourselves to the
  smallest possible ``usable'' combination of morphisms. It should be
  noted that our results are completely parametric in the choice of
  morphisms used; any collection that represents natural
  transformations between the codes works, as long as the identity
  morphisms and composition can be defined. The range spans all the
  way from no non-identity morphisms at all (in which case it is
  rather hard to define a functor $(A \to \C)\to \pIR(\C)$!) to taking
  $\Hom{\pIR}(x, y) = \inter{x} \to \inter{y}$, which gives rise to a
  full and faithful interpretation by definition. The latter would
  mean that the interpretation $\inter{-}$ would need to be defined
  simultaneously with the codes, with the effect that the very
  definition of positive inductive-recursive definitions itself would
  be inductive-recursive. To avoid this stronger assumption in the
  metatheory, we prefer the current formulation, where the meta-theory
  only uses inductive-inductive definitions --- a much weaker
  principle.
\end{remark}

Let us now return to the examples from the end of \cref{sec:IR}.

\begin{example}[A universe closed under dependent sums in $\Fam(\Dual{\Set})$]
\label{ex:sigma-again}
In \cref{ex:sigma}, we defined an ordinary $\IR$ code
$\gamma_{\nattype,\Sigma} : \IR(\Set)$ for a universe closed under
sigma types.  We can extend this code to an $\pIR$ code
\[
\gamma_{\nattype,\Sigma} = \iota\,\nattype +_{\IR} \delta_{\one}(X \mapsto \delta_{X\ast}(Y \mapsto \iota\,\Sigma (X\ast)\,Y)) : \pIR(\Dual{\Set})
\]
where now $G \coloneqq Y \mapsto \iota\,\Sigma (X\ast)\,Y$ and $F
\coloneqq X \mapsto \delta_{X\ast}\,G$ need to be functors.  Given $f
: Y \to Y'$ in $X \to \Dual{\Set}$, i.e. an $X$-indexed collection of
morphisms \ $f_x : Y(x) \to Y'(x)$ in $\Dual{\Set}$, we have $\Sigma x
: (X\ast) . f_x : \Sigma (X\ast)\,Y \to\,\Sigma (X\ast)\,Y'$ in
$\Dual{\Set}$ so that we can define
\[
G(f) : \iota\,\Sigma (X\ast)\,Y \to \iota\,\Sigma (X\ast)\,Y'
\]
by $G(f) = \cons{\iota}{\iota}(\Sigma x : (X\ast) . f_x)$.

We also need $F$ to be a functor. Given $f : X \to X'$ in $\one \to
\Dual{\Set}$, we need to define $F(f) : \delta_{X\ast}\,G \to
\delta_{X'\ast}\,G$.  According to \cref{def:pos-IR-codes}, it is
enough to give an $\alpha : X' \ast \to X \ast$ and a natural
transformation $\rho$ from $G$ to $G (- \circ \alpha)$. We can choose
$\alpha = f_{\ast}$, and $\rho$ to be the natural transformation whose
component at $Y : X\ast \to \Dual{\Set}$ is given by $\rho_Y=
\cons{\iota}{\iota}([\inn_{f_{\ast} x}]_{x : X'\ast})$, where
$[\inn_{f_{\ast} x}]_{x : X'\ast}\,: \Sigma (X'\ast)\,Y\circ f_\ast
\to \Sigma (X\ast)\,Y$. Notice that working in $\Dual{\Set}$ made sure
that $f_{\ast}$ was going in the right direction.
\end{example}

\begin{example}[A universe closed under dependent function spaces in $\Fam(\Set^{\cong})$]
\label{ex:pi-again}
In \cref{ex:pi}, we saw how we could use induction-recursion to define
a universe closed under $\Pi$-types in $\Fam\,|\Set|$, using the
following code:
\[
\gamma_{\nattype,\Pi} = \iota\,\nattype +_{\IR} \delta_{\one}(X \mapsto \delta_{X\ast}(Y \mapsto \iota\,\Pi (X\ast)\,Y))\,: \IR(\Set)
\]
If we try to extend this to an $\pIR$ code in $\Fam(\Set)$ or
$\Fam(\Dual{\Set})$, we run into problems. Basically, given a morphism
$f : X' \to X$, we need to construct a morphism $\Pi\,X'\,(Y \circ f)
\to \Pi\,X\,Y$, which of course is impossible if e.g.\ $X' = \zero$,
$X = \one$, and $Y \ast = \zero$.

Hence the inherent contravariance in the $\Pi$-type means that
$\gamma_{\nattype,\Pi}$ does not extend to a $\pIR(\Set)$ or
$\pIR(\Dual{\Set})$ code.  However, if we move to the groupoid
$\Set^{\cong}$, which is the subcategory of $\Set$ with only
isomorphisms as morphisms, we do get an $\pIR(\Set^{\cong})$ code
describing the universe in question, which is still living in a
category beyond the strict category $\Fam\,|\Set|$. It would be
interesting to understand the relevance of positive
induction-recursion to Homotopy Type Theory~\cite{hottbook} where
groupoids and their higher order relatives play such a prominent role.
\end{example}

\section{Stronger elimination principles}
\label{sec:elim-principle}

From \Cref{ex:sigma-again} we know that the $\IR$ code
$\gamma_{\nattype,\Sigma}$ defining a universe containing the set of
natural numbers $\nattype$ and closed under $\Sigma$-type can be
extended to a $\pIR$ code of type $\pIR(\Set^{\cong})$ or
$\pIR(\Dual{\Set})$. Thus, the code $\gamma_{\nattype,\Sigma}$ can be
interpreted as an endofunctor on $\Fam(\Set^{\cong})$ or on
$\Fam(\Dual{\Set})$ respectively.
In this section we aim to explore by means of an example what the
elimination principle for $\pIR$ codes can be used for: we show how
the simple elaboration of the code $\gamma_{\nattype,\Sigma}$ to a
code of type $\pIR(\Set^{\cong})$ offers us the possibility to
implement a more sophisticated recursion principle on the universe we
are currently building.

Recall that from the perspective of initial algebra semantics, the
elimination principle for a type is captured by the universal property
of the initial algebra: if $F$ is an endofunctor and $(\mu_F,in_F)$
its initial algebra, then we know that for any other algebra $(X,f)$
there exists a (unique) $F$-algebra homomorphism $\alpha_f\,:\mu_F\to
X$ which makes the following diagram commute:
 \[ \xymatrix{
   F \mu_F \ar[r]^{in} \ar[d]_{F(\alpha_g)} & \mu_F \ar[d]^{\alpha_g}\\
   F X \ar[r]_{g} & X}
 \] 
 The initial property of $(\mu_F,in_F)$ thus gives us a definition by
 recursion into any other type possessing the right $F$-algebra
 structure. By working in $\Fam(\C)$ instead of $\Fam\,|\C|$, we are
 allowing many more algebras compared to ordinary inductive-recursive
 definitions, or put differently, we get a stronger elimination
 principle.
\begin{example}
  To see why a stronger elimination principle is sometimes necessary,
  consider the initial algebra $((U^*,T^*),(\inn_0,\inn_1))$ for a
  code $\gamma_{\one,\nattype,\Sigma}\,:\pIR(\Set^{\cong})$
  representing a universe containing a set $\one$ with only one
  element, the set $\nattype$ of natural numbers and moreover closed
  under $\Sigma$-types. The universe $U^*$ contains many codes for
  ``the same'' set, up to isomorphism. For instance, it contains codes
  for each of the following isomorphic sets:
  \begin{align*}
    \one\cong&(\Sigma\one)\one\cong(\Sigma\one)(\Sigma\one)\one\,\,\,\ldots\\
    \nattype\cong&(\Sigma\nattype)\one\cong(\Sigma\one)\nattype\cong(\Sigma\one)(\Sigma
    \one)\nattype\,\,\,\ldots
  \end{align*}
  Moreover, for each $\Sigma$-type the following isomorphism holds:
  \begin{equation}\label{eq:Sigma-iso}
    (\Sigma z:(\Sigma x:A)B(x))C(z)\cong
    (\Sigma x:A)(\Sigma y\,:B(a))C(\pair{x}{y})
  \end{equation}
  Therefore, for each $\Sigma$ set with at least two nested
  $\Sigma$'s$, U^*$ contains a code for both these ways to
  parenthesize a $\Sigma$-type.  It might be advantageous to instead
  keep a single representative for each isomorphism class. We might
  hope to do so using the initiality of $(U^*,T^*)$, and indeed, the
  elimination principle for positive inductive-recursive definitions
  allows us to do exactly that.

  First of all we need to decide what normal forms for elements in the
  universe we want. We can specify this by defining a predicate
  $\NF\,:U^*\to \Set$ on the universe $(U^*,T^*)$, which decides if a
  set is in normal form: we decree that the codes for the sets $\one$
  and $\nattype$ are in normal form, and a code for $\Sigma\,A\,B$ is
  in normal form if $A$ is in normal form, $B(a)$ is in normal form
  for each $a : A$, $A$ is not $\one$, and finally it is of the form
  of the right hand side of \eqref{eq:Sigma-iso}. There is of course
  some room for different choices here. Formally, and employing some
  cleverness in how we set things up, we can define the predicate by
  the elimination principle for $U^*$ by the following clauses:
  \begin{align*}
    \NF(\hat{\one}) &= \top \\
    \NF(\hat{\nattype}) &= \top \\
    \NF(\hat{\Sigma}\, \hat{\one}\, b) &= \bot \\
    \NF(\hat{\Sigma}\,\hat{\nattype}\, b) &= \forall n : \nattype\,.\, \NF(b(n)) \\
    \NF((\hat{\Sigma}\, (\hat{\Sigma}\, a'\, b')\, b) &= \bot
  \end{align*}

  We now define a new family $(U_{\NF},T_{\NF})$, containing sets in
  normal forms only, by letting
  \begin{align*}
    U_\NF:=&\,(\Sigma u:U^*)\NF(u)\\
    T_\NF(u,p):=&\,T^*(u)\end{align*} We can also define a
  $\Fam(\Set^{\cong})$ morphism
  $(\phi,\eta)\,:\inter{\gamma_{\one,\nattype,\Sigma}}(U_\NF,T_\NF)\to(U_\NF,T_\NF)$
  which endows $(U_{\NF},T_{\NF})$ with an
  $\inter{\gamma_{\one,\nattype,\Sigma}}$-algebra structure. For this,
  is it crucial that we are working in $\Fam(\Set^{\cong})$ and not
  $\Fam\,|\Set|$, since we can only expect that a $\Sigma$-type of
  normal forms is isomorphic to a normal form, not equal to one; i.e.\
  if $A$ is in normal form, and $B(a)$ is in normal form for all $a :
  A$, then $\Sigma\,A\,B$ is not necessary normal (as e.g.\ $A = \one$
  shows), but we can always find a normal form isomorphic to
  $\Sigma\,A\,B$. The function $\phi$ maps $A$ and $B$ to this normal
  form, and $\eta$ is a proof that it is indeed isomorphic to
  $\Sigma\,A\,B$. We only give the definition of $\phi :
  \inter{\gamma_{\one,\nattype,\Sigma}}_0(U_\NF,T_\NF)\to U_\NF$ here;
  the definition of $\eta$ follows the same pattern.\enlargethispage{2\baselineskip}
  \begin{align*}
    \phi(\hat{1}) & = (\hat{\one} , \ast) \\
    \phi(\hat{\nattype}) &= (\hat{\nattype} , \ast) \\
    \phi(\hat{\Sigma}\,(\hat{\one} , p)\,b) &= (\pi_0 b(\ast) , \pi_1 b(\ast)) \\
    \phi (\hat{\Sigma}\,(\hat{\nattype} , p)\,b) &= (\hat{\Sigma}\,\hat{\nattype}\,(\pi_0 \circ b) , n \mapsto \pi_1(b(n))) \\
\phi (\hat{\Sigma}\,(\hat{\Sigma}\,\hat{\nattype}\, b', p)\, b) &=
(\hat{\Sigma}\, \hat{\nattype}\,(n \mapsto  \pi_0(\phi(\ldots))) , (n
\mapsto \pi_1(\phi(\ldots)))) \\
\text{where } &\phi(\ldots) = \phi(\hat{\Sigma}\, (b'(n) , p(n)) (y \mapsto b(n, y))) \\
\phi(\hat{\Sigma}\,(\hat{\Sigma}\,(\hat{\Sigma}\,a\,b)\, b' , p)\,c) & \quad \text{impossible case by the def.\ of $\NF$; we have $p : \bot$} \\
\phi(\hat{\Sigma}\,(\hat{\Sigma}\,\hat{\one}\, b' , p)\,c) & \quad \text{impossible case by the def.\ of $\NF$; we have $p : \bot$}
  \end{align*}
  By initiality of $(U^*,T^*)$ we get a morphism $(\No,\Co)$ making
  the following diagram commute:
  \[ \xymatrix{
    \inter{\gamma_{\one,\nattype,\Sigma}}(U^*,T^*)  \ar[r]^{\,\,\quad(\inn_0,\inn_1)} \ar[d]_{ \inter{\gamma_{\one,\nattype,\Sigma}}(\No,\Co)} & (U^*,T^*) \ar[d]^{(\No,\Co)}\\
    \inter{\gamma_{\one,\nattype,\Sigma}}(U_{\NF},T_{\NF})
    \ar[r]_{\,\,\quad(\phi,\eta)} & (U_{\NF},T_{\NF})}
  \]
  The map $(\No,\Co)$ recursively computes the normal form for each
  set in the universe $(U^*,T^*)$. Indeed, $\No:U^*\to U_\NF$ maps
  each name $u$ of a set $T(u)$ in the universe to the name of the
  corresponding set in normal form, while the natural transformation
  $\Co_u:T^*(u)\cong T_\NF(\No(u))$ ensures that the code actually
  denotes isomorphic sets. Of course, we do not get $(\No,\Co)$ for
  free; defining $\phi$ and $\eta$ already amounts to most of the work
  for the full definition. The point is rather that initiality in
  $\Fam(\Set^{\cong})$ is a definitional principle which \emph{allows
    us} to define $\No$ and $\Co$. Furthermore, by using initiality,
  we can give a structured definition, where we only have to consider
  the separate cases in isolation.
\end{example}

\begin{example}
  As another example of the use of elimination principles beyond
  ordinary inductive-recursive definitions, we can define functions
  between universes with different ground sets. Consider two universes
  $U_1$, $U_2$ closed under the same type-theoretic operations, but
  containing different ground sets $B_1$, $B_2$.  Given a function
  $B_1 \to B_2$, we would like to be able to extend this function to a
  function $U_1 \to U_2$ between all of the two universes. For
  example, we could have a universe
  $(U_{\nattype,\Sigma},T_{\nattype,\Sigma})$, closed under
  $\Sigma$-types and containing the natural numbers $\nattype$, and
  another universe $(U_{\mathbb{Z},\Sigma},T_{\mathbb{Z},\Sigma})$
  also closed under $\Sigma$-types but instead containing the integers
  $\mathbb{Z}$ as ground set. There ought to exist a function between
  them in $\Fam(\Dual{\Set})$ (the contravariance is needed for the
  negative occurrence of $U$ in the code for the sigma type), since
  clearly these two universes are closely related. By the elimination
  principle for positive inductive-recursive definitions, it suffices
  to provide a function between the ground sets, i.e.\ a function from
  $\mathbb{Z}$ into $\nattype$, for instance the absolute value
  function or the square function. In detail, every function
  $f\,:\mathbb{Z}\to \nattype$ induces a $\Fam(\Dual{\Set})$-morphism
 $$\inter{\gamma_{\nattype,\Sigma}}(U_{\mathbb{Z},\Sigma},T_{\mathbb{Z},\Sigma})\longrightarrow(U_{\mathbb{Z},\Sigma},T_{\mathbb{Z},\Sigma})$$ 
 showing that $(U_{\mathbb{Z},\Sigma},T_{\mathbb{Z},\Sigma})$ has an
 $\inter{\gamma_{\nattype,\Sigma}}$-algebra structure. Therefore,
 initiality of $(U_{\nattype,\Sigma},T_{\nattype,\Sigma})$ gives us a
 map
 $(U_{\nattype,\Sigma},T_{\nattype,\Sigma})\to(U_{\mathbb{Z},\Sigma},T_{\mathbb{Z},\Sigma})$
 which uses $f$ to recursively compute the embedding of
 $(U_{\nattype,\Sigma},T_{\nattype,\Sigma})$ into
 $(U_{\mathbb{Z},\Sigma},T_{\mathbb{Z},\Sigma})$.
\end{example}

\section{Application: A Concrete Representation of Nested Types}
\label{sec:nested-types}

Nested data types~\cite{amu05} have been used to implement a number of
advanced data types in languages which support higher-kinded types,
such as the widely-used functional programming language Haskell. Among
these data types are those with constraints, such as perfect
trees~\cite{hin00}; types with variable binding, such as untyped
$\lambda$-terms~\cite{FTP99}; cyclic data structures~\cite{ghuv06};
and certain dependent types~\cite{mm04}. 

A canonical example of a nested data type is $\mathtt{Lam} : \Set \to \Set$
defined in Haskell as follows:
\begin{verbatim}
    data Lam a = Var a | App (Lam a) (Lam a) | Abs (Lam (Maybe a)) 
\end{verbatim} 
\noindent 
The type \verb|Lam a| is the type of untyped $\lambda$-terms over
variables of type \verb|a| up to $\alpha$-equivalence. Here, the
constructor \verb|Abs| models the bound variable in an abstraction of
type \verb|Lam a| by the \verb|Nothing| constructor of type
\verb|Maybe a|, and any free variable \verb|x| of type \verb|a| in an
abstraction of type \verb|Lam a| by the term \verb|Just x| of type
\verb|Maybe a|; The key observation about the type \verb|Lam a| is
that elements of the type \verb|Lam (Maybe a)| are needed to build
elements of \verb|Lam a| so that, in effect, the entire family of
types determined by \verb|Lam| has to be constructed
simultaneously. Thus, rather than defining a family of inductive
types, the type constructor \verb|Lam| defines a {\em type-indexed
  inductive family of types}.  The kind of recursion captured by
nested types is a special case of {\em non-uniform
  recursion}~\cite{bla00}.

On the other hand, ordinary non-nested data types such as
\verb|List a| or \verb|Tree a| can be represented as
\emph{containers}~\cite{AGT,Abb}. Recall that a container $(S, P)$ is
given by a set $S$ of shapes, together with a family $P : S \to \Set$
of positions. Each container gives rise to a functor $\interC{S, P} :
\Set \to \Set$ defined by $\interC{S, P}(X) = \Sigma s : S\,.\,P(s)
\to X$. Since also nested data types such as \verb|Lam| have type
$\Set \to \Set$, it make sense to ask the following question:
\emph{Are nested data types representable as containers?}  There would
be benefits of a positive answer, since container technology could
then be applied to nested data types. For instance, we could operate
on nested types using container operations such as the derivative, and
classify the natural transformations between them. Note in particular
that the canonical recursion operator {\tt fold} for nested data types
is a natural transformation.

We give a positive answer to the above question using $\pIR$. As far
as we are aware, this is a new result.  We sketch our overall
development as follows:
\begin{enumerate}[label=(\roman*)]
\item We define a grammar $\Nest$ for defining nested types and a
  decoding function $\Int{-}:\Nest \rightarrow (\Set \to \Set)
  \rightarrow (\Set \to \Set)$. The data types we are interested in
  arise as initial algebras $\mu \Int{N}$ for elements $N$ of the
  grammar.

\item We show that $\Int{N}$ restricts to an endofunctor
  $\Int{N}_{\cont} : \cont \to \cont$ on the category $\cont$ of
  containers.

\item Noting that $\cont = \Fam(\Dual{\Set})$, we use $\pIR$ to define
  $\Int{N}_{\cont}$. Hence by the results of this paper,
  $\Int{N}_{\cont}$ has an initial algebra $\mu \Int{N}_{\cont}$. We
  finish by arguing that $\mu \Int{N} = \interC{\mu \Int{N}_{\cont}}$
  and hence that, indeed, nested types are containers.
\end{enumerate}%

\paragraph{\bf A Grammar for Nested Types}
We now present a grammar for defining nested data types.  Since our
point is not to push the theory of nested data types, but rather to
illustrate an application of positive induction-recursion, we keep the
grammar simple. The grammar we use is
\[
\calF = \Id \; | \; K\,C \; | \; \calF + \calF \; | \; \calF \times \calF \; | \; \calF \circledast \calF
\]
where $C$ is any container. The intention is that $\Id$ stands for the
identity functor mapping a functor to itself, $K C$ stands for the
constant functor mapping any functor to the interpretation of the
container $C$, $+$ and $\times$ stand for the coproduct and product of
functors respectively, and $\circledast$ for the pointwise composition
of functors. These intentions are formalised by a semantics for the
elements of our grammar given as follows
\[
\begin{array}{lll}
\Int{-} & : & \Nest \rightarrow (\Set \to \Set) \rightarrow (\Set \to \Set)\\
\Int{\Id} \; F & = & F \\
\Int{K \; C} \; F & = & \interC{C}  \\
\Int{\calF_0 + \calF_1} \; F & = & \Int{\calF_0} \; F  + \Int{\calF_1} \; F  \\ 
\Int{\calF_0 \times \calF_1} \; F & = & \Int{\calF_0} \; F  \times \Int{\calF_1} \; F  \\ 
\Int{\calF_0 \circledast \calF_1} \; F & = & \Int{\calF_0} \; F \circ \Int{\calF_1} \; F  \\ 
\end{array} 
\]
For example, the functor 
\[
L \; F \; X \;\; = \;\; X \; +\; (FX \times FX) \; +\; F(X+1)
\]
whose initial algebra is the type \verb|Lam|, is of the form $\Int{N_L}$ with 
\[
N_L = K\,I_C \; + \; (\Id \times \Id) \; + \; (\Id \circledast (K\, M))
\]
where $I_C =(\one, \_ \mapsto \one)$ is the container with one shape
and one position, representing the identity functor on $\Set$, and $ M
= (\two , x \mapsto \mathsf{if} \; x \; \mathsf{then} \; \one
\;\mathsf{else} \; \zero) $ is the container with two shapes, the
first one with one position, and the second one with no position. $M$
represents the functor on $\Set$ mapping $X$ to $X+\one$.

\paragraph{\bf Nested Types as Functors on Containers}
The next thing on our agenda is to show that every element $N$ of
$\Nest$ has an interpretation as an operator on containers
$\Int{N}_{\cont} : \cont \rightarrow \cont$, such that
$\Int{N}_{\cont}$ is the restriction of $\Int{N}$ to the subcategory
of functors that are extension of containers. This is done easily
enough by recursion on $N$, noting that containers are closed under
coproduct, product and composition:

\begin{lemma}[\cite{AGT}]
  Let $(S, P)$ and $(S', P')$ be containers. Define
  \begin{align*}
  (S, P) + (S', P') &\coloneqq (S + S', [P, P']) \\
  (S, P) \times (S', P') &\coloneqq (S \times S', (s, s') \mapsto P(s) + P'(s')) \\
  (S, P) \circ (S', P') &\coloneqq (\Sigma s : S\,.\,(P(s)\to S'), (s, f) \mapsto \Sigma p : P(s)\,.\,P'(f(p)))
  \end{align*}
We then have
  \begin{align*}
  \interC{(S, P) + (S', P')} &\cong \interC{S, P} + \interC{S', P'} \\
  \interC{(S, P) \times (S', P')} &\cong \interC{S, P} \times \interC{S', P'} \\
  \interC{(S, P) \circ (S', P')} &\cong \interC{S, P} \circ
  \interC{S', P'}\rlap{\hbox to 103.5 pt{\hfill\qEd}}
  \end{align*}
\end{lemma}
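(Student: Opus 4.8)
The plan is to verify each of the three displayed isomorphisms by a routine computation in extensional type theory: unfold the definition of $\interC{-}$ on both sides and massage the resulting $\Sigma$- and $\Pi$-types using standard distributivity laws. Naturality in $X$ will be immediate throughout, since the only place $X$ occurs is in the exponential $(-) \to X$, and every transport map we build is natural in that argument; so I would only spell out the object-level isomorphism and remark that naturality follows by inspection (or a one-line diagram chase).

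For the coproduct, I would unfold $\interC{(S,P) + (S',P')}(X) = \Sigma w : S + S'.\,[P,P'](w) \to X$ and split the sum over $S + S'$ into the cases $w = \inl\,s$ and $w = \inr\,s'$; since $[P,P'](\inl\,s) = P(s)$ and $[P,P'](\inr\,s') = P'(s')$, this decomposes as $(\Sigma s : S.\,P(s) \to X) + (\Sigma s' : S'.\,P'(s') \to X)$, which is exactly $\interC{S,P}(X) + \interC{S',P'}(X)$. For the product, I would unfold $\interC{(S,P) \times (S',P')}(X) = \Sigma (s,s') : S \times S'.\,(P(s) + P'(s')) \to X$, apply the isomorphism $(A + B) \to X \cong (A \to X) \times (B \to X)$ to the exponential, and reshuffle the nested $\Sigma$ and $\times$ to obtain $(\Sigma s : S.\,P(s) \to X) \times (\Sigma s' : S'.\,P'(s') \to X)$.

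For composition, I would start from $\interC{S,P}(\interC{S',P'}(X)) = \Sigma s : S.\,P(s) \to \bigl(\Sigma s' : S'.\,P'(s') \to X\bigr)$ and apply the type-theoretic axiom of choice $\bigl(\Pi p : P(s).\,\Sigma s' : S'.\,Q(s',p)\bigr) \cong \bigl(\Sigma f : P(s) \to S'.\,\Pi p : P(s).\,Q(f(p),p)\bigr)$ with $Q(s',p) = (P'(s') \to X)$; pulling $\Sigma f$ out past $\Sigma s$ and then currying $\Pi p : P(s).\,P'(f(p)) \to X$ into $\bigl(\Sigma p : P(s).\,P'(f(p))\bigr) \to X$ gives precisely $\Sigma (s,f) : \bigl(\Sigma s : S.\,P(s) \to S'\bigr).\,\bigl(\Sigma p : P(s).\,P'(f(p))\bigr) \to X = \interC{(S,P) \circ (S',P')}(X)$.

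The only step that is more than bookkeeping is the composition case: the use of the type-theoretic axiom of choice is essential, and it is exactly what accounts for the asymmetric shape of the container composition formula. Everything else is a matter of repeatedly applying the evident reshuffling isomorphisms for $\Sigma$, $\times$, $+$ and $\to$, all of which hold on the nose up to the canonical iso in extensional type theory, so I expect no genuine obstacle — this is why the statement can be attributed to~\cite{AGT} without further comment.
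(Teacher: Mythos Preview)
Your proof is correct and is exactly the standard argument one would expect. Note that the paper does not actually give a proof of this lemma: it simply states the result with a citation to~\cite{AGT} and a \textsc{qed} symbol, so there is no ``paper's own proof'' to compare against beyond the reference. Your unfolding of $\interC{-}$ together with the distributivity of $\Sigma$ over $+$, the isomorphism $(A+B)\to X \cong (A\to X)\times(B\to X)$, and the type-theoretic axiom of choice plus currying for the composition case is precisely how the cited source establishes these facts.
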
\medskip

\noindent Thus, the interpretation $\Int{N}$ indeed restricts to the subcategory
$\cont$:

\begin{proposition}
Define $\Int{-}_{\cont}  : \Nest \rightarrow \cont \rightarrow \cont$ by
\[
\begin{array}{lll}
\Int{\Id}_{\cont} \; C & = & C \\
\Int{K \; (S, P)}_{\cont} \; C & = & (S, P)  \\
\Int{\calF_0 + \calF_1}_{\cont} \; C & = & \Int{\calF_0}_{\cont} \; C  + \Int{\calF_1}_{\cont} \; C  \\ 
\Int{\calF_0 \times \calF_1}_{\cont} \; C & = & \Int{\calF_0}_{\cont} \; C  \times \Int{\calF_1}_{\cont} \; C  \\ 
\Int{\calF_0 \circledast \calF_1}_{\cont} \; C & = & \Int{\calF_0}_{\cont} \; C \circ \Int{\calF_1}_{\cont} \; C \\
\end{array} 
\]
The following diagram then commutes:
\begin{equation*}
\label{eq:nest-as-cont}
\begin{gathered}
\xymatrix{
\cont \ar[r]^-{\interC{-}} \ar[d]_{\Int{N}_\cont} & (\Set \to \Set) \ar[d]^{\Int{N}} \\
\cont \ar[r]_-{\interC{-}}        & (\Set \to \Set)\rlap{\hbox to 151,5 pt{\hfill\qEd}}
} 
\end{gathered}
\end{equation*}
\end{proposition}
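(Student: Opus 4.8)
The plan is to prove the proposition by induction on the structure of $N : \Nest$, establishing in each case that the square commutes \emph{up to natural isomorphism}, using the previous lemma to handle the container-level operations. First I would set up notation: write $C$ for an arbitrary container (object of $\cont$), and note that the claim to prove is that $\interC{\Int{N}_\cont\,C} \cong \Int{N}\,\interC{C}$, naturally in $C$. Since both $\Int{-}_\cont$ and $\Int{-}$ are defined by the same recursion scheme on $N$, the proof will mirror that recursion exactly.

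The base cases are immediate. For $N = \Id$ we have $\Int{\Id}_\cont\,C = C$ and $\Int{\Id}\,\interC{C} = \interC{C}$, so the two sides are literally equal. For $N = K\,(S,P)$ we have $\Int{K\,(S,P)}_\cont\,C = (S,P)$ and $\Int{K\,(S,P)}\,\interC{C} = \interC{S,P}$, so again equality holds on the nose. For the three inductive cases, I would invoke the induction hypotheses $\interC{\Int{\calF_i}_\cont\,C} \cong \Int{\calF_i}\,\interC{C}$ for $i \in \{0,1\}$, and then combine them: in the $+$ case, apply the first isomorphism of the lemma to get $\interC{\Int{\calF_0}_\cont\,C + \Int{\calF_1}_\cont\,C} \cong \interC{\Int{\calF_0}_\cont\,C} + \interC{\Int{\calF_1}_\cont\,C}$, then use the induction hypotheses on each summand and the fact that coproduct of functors is functorial in each argument; the $\times$ case is identical using the second isomorphism; the $\circledast$ case uses the third isomorphism $\interC{(S,P)\circ(S',P')} \cong \interC{S,P}\circ\interC{S',P'}$ together with the observation that if $G \cong G'$ and $H \cong H'$ then $G \circ H \cong G' \circ H'$ (horizontal composition of natural isomorphisms).

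The one point that requires genuine care — and which I expect to be the main obstacle — is bookkeeping the \emph{naturality} of these isomorphisms in $C$, i.e.\ checking that the square of the proposition commutes as a diagram of functors and natural transformations, not merely pointwise. Concretely, one must verify that for each morphism $C \to C'$ in $\cont$ the evident naturality square built from the component isomorphisms commutes; this follows because the isomorphisms of the lemma are themselves natural (coproduct, product and composition of containers are bifunctorial and the extension functor $\interC{-}$ preserves this structure up to natural isomorphism), and because composition and pairing of natural isomorphisms again yield natural isomorphisms. Since all the pieces are natural and the induction hypotheses are assumed natural, the glued-together isomorphism is natural as well, and one concludes by chasing the resulting diagram. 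I would remark that, as throughout the paper, "commutes" here means up to natural isomorphism, which is all that is needed for the downstream argument that $\mu\Int{N} = \interC{\mu\Int{N}_\cont}$.
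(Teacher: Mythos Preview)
Your proposal is correct and is exactly the argument the paper has in mind: the proposition is stated with a bare \qEd\ and no explicit proof, relying on the immediately preceding lemma (closure of containers under $+$, $\times$, $\circ$) together with the obvious structural induction on $N$ that you spell out. Your additional care about naturality in $C$ is not wrong, but the paper does not dwell on it; the downstream use only needs the pointwise isomorphism together with the fact that $\interC{-}$ preserves the relevant colimits.
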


\noindent Coming back to our running example, if we consider the nested code
$N_L$ for \verb|Lam|, we have $\Int{N_L}_{\cont}(S,P) = (S_L,P_L)$
with
\[
\begin{array}{lll}
S_L & = & 1 + (S \times S) + \Sigma s:S. \; P(s) \rightarrow 2 \\
P_L \; (\inn_1\,\ast) & = & 1 \\
P_L \; (\inn_2\,(s,s')) & = & P(s) + P(s') \\
P_L \; (\inn_3\,(s,f)) & = & \Sigma p:P(s). \; \mathsf{if} \; f(p) \; \mathsf{then} \; \one \;\mathsf{else} \; \zero
\end{array}
\]
We see that indeed the positions $P$ show up in the equation for the
shape $S_L$, so that we should expect an inductive-recursive
definition to be the initial solution to this set of equations.

\paragraph{\bf Nested Types are Containers}
We know that $\cont=\Fam(\Dual{\Set})$. Now, we want to show that for
every code $N:\Nest$, the functor $\Int{N}_{\cont}$ is a $\pIR$
functor: to see this one needs to carefully examine the constructions
on families used to build $\Int{N}_{\cont}$. We will need to show that
we can emulate the identity functor, containers and container product,
coproduct and composition using $\pIR$ codes. Most of these are
straightforward, but container composition will require some
sophistication: we will need to observe that all $\pIR$ codes in
question in fact are of a particularly simple, \emph{uniform} form. We
deal with each code in the nested grammar in turn.

\begin{lemma}[$\pIR$ codes for $\Id$ and $K (S, P)$] \label{thm:nestIdK}
\mbox{}
  \begin{enumerate}[label=(\roman*)]
  \item $\inter{\delta_{\one}((X : \one \to \Dual{\Set}) \mapsto \iota(X \ast))}C \cong C$.
  \item $\inter{\sigma_{S}(s \mapsto \iota\,P(s))}C \cong (S, P)$. \qed
  \end{enumerate}
\end{lemma}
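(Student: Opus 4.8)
The plan is to prove both isomorphisms by unfolding the definition of the semantics of $\pIR$ codes on objects. By \cref{thm:p-IR-functor}, the action of $\inter{-}$ on objects of $\Fam(\Dual{\Set})$ agrees with the $\IR$ semantics recalled in \cref{thm:IR-functors}, so in each case the task reduces to computing a coproduct of families via the formula $\sum_{a:A}(X_a,P_a) = (\sum_{a:A}X_a,[P_a]_{a:A})$ from \cref{rem:Fam} and then recognising the result up to the canonical isomorphism on the index set.

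For part (ii), write $C$ for the argument container, which the code ignores. Unfolding the $\sigma$ and $\iota$ clauses gives
\begin{align*}
\inter{\sigma_{S}(s \mapsto \iota\,P(s))}\,C
&= \sum_{s:S}\inter{\iota\,P(s)}\,C
 = \sum_{s:S}(\one,\lambda\_\,.\,P(s)) \\
&= \Bigl(\sum_{s:S}\one,\ [\lambda\_\,.\,P(s)]_{s:S}\Bigr).
\end{align*}
The evident bijection $\sum_{s:S}\one \cong S$ carries the family $[\lambda\_\,.\,P(s)]_{s:S}$ to $P$ on the nose, so it is the first component of a $\Fam(\Dual{\Set})$-isomorphism onto $(S,P)$ whose second component is the identity family.

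For part (i), first note that $F \coloneqq (X : \one \to \Dual{\Set}) \mapsto \iota(X \ast)$ genuinely is a functor --- on a morphism $f : X \to X'$ of $\one \to \Dual{\Set}$ it acts by $\cons{\iota}{\iota}(f_{\ast})$ --- so $\delta_{\one}\,F$ is a well-formed code. Applying the $\delta$ clause with $C = (S,P)$ gives
\begin{align*}
\inter{\delta_{\one}\,F}(S,P)
&= \sum_{g:\one\to S}\inter{F(P\circ g)}(S,P)
 = \sum_{g:\one\to S}\inter{\iota(P(g\,\ast))}(S,P) \\
&= \sum_{g:\one\to S}(\one,\lambda\_\,.\,P(g\,\ast)).
\end{align*}
Since $\sum_{g:\one\to S}\one \cong (\one \to S) \cong S$, the composite sending $g$ to $g\,\ast$, and since this bijection sends $[\lambda\_\,.\,P(g\,\ast)]_{g:\one\to S}$ to $P$, we again obtain an isomorphism in $\Fam(\Dual{\Set})$ with identity second component.

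I do not expect a real obstacle; the argument is a routine unfolding. The only points that need care are that the two relations are genuine isomorphisms rather than equalities --- the index sets $\sum_{s:S}\one$ and $\one \to S$ are only canonically isomorphic to $S$, not definitionally equal to it --- and that, although we are nominally in $\Fam(\Dual{\Set})$ rather than $\Fam\,|\Set|$, the second components of all the morphisms above are identity families, hence (trivially) isomorphisms in $\Dual{\Set}$; so this particular lemma makes no essential use of the non-discrete structure. That structure will, however, be needed for the codes emulating product, coproduct and especially container composition later in the section.
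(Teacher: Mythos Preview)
Your proof is correct and matches the paper's intent: the paper simply marks this lemma with \qed, treating it as a routine unfolding of the $\iota$, $\sigma$, and $\delta$ clauses together with the canonical bijections $\sum_{s:S}\one \cong S$ and $\one \to S \cong S$, which is exactly what you carry out. Your additional remarks that $F$ is genuinely a functor and that the resulting morphisms have identity second components are accurate sanity checks.
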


For encoding container coproducts, we can reuse the binary coproducts
$+_{\IR}$ on $\pIR$ codes from \cref{ex:sigma}.

\begin{lemma}[$\pIR$ codes for $N + N'$]
\label{thm:nestPlus}
$\inter{\gamma +_{\IR} \gamma'}C \cong
 (\inter{\gamma}_0C + \inter{\gamma'}_0C, [\;\inter{\gamma}_1C, \inter{\gamma'}_1C\;])$. \qed
\end{lemma}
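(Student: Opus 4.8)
The plan is to unfold the definition of the binary coproduct of codes and recognise the result as the binary coproduct of families in $\Fam(\C)$.

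Recall from \cref{ex:sigma} that $\gamma +_{\IR} \gamma' \coloneqq \sigma_{\two}(0 \mapsto \gamma;\, 1 \mapsto \gamma')$. First I would compute the object part. By the interpretation of $\sigma$ codes in \cref{thm:p-IR-functor} (whose action on objects is, as noted there, identical to that of \cref{thm:IR-functors}),
\[
\inter{\gamma +_{\IR} \gamma'}C \;=\; \sum_{a : \two}\inter{(0 \mapsto \gamma;\, 1 \mapsto \gamma')(a)}\,C ,
\]
which is a $\two$-indexed coproduct in $\Fam(\C)$. Writing out the formula for set-indexed coproducts of families from \cref{rem:Fam}, and observing that the two summands are $\inter{\gamma}C$ and $\inter{\gamma'}C$, the right-hand side equals $\big(\sum_{a:\two}X_a,\ [P_a]_{a:\two}\big)$ with $(X_0,P_0) = \inter{\gamma}C$ and $(X_1,P_1) = \inter{\gamma'}C$. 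Composing with the canonical bijection $\sum_{a:\two}X_a \cong X_0 + X_1$ --- which lifts to an isomorphism of families since the second components agree along it --- gives exactly $\big(\inter{\gamma}_0C + \inter{\gamma'}_0C,\ [\,\inter{\gamma}_1C,\,\inter{\gamma'}_1C\,]\big)$, as claimed.

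Since this lemma is later used to exhibit $\Int{N}_{\cont}$ as a $\pIR$ functor, I would also check that the isomorphism is natural in $C$, so that it really is an isomorphism of functors $\Fam(\C)\to\Fam(\C)$. Given $(h,k) : C \to C'$ in $\Fam(\C)$, the morphism clause for $\sigma$ in \cref{thm:p-IR-functor} gives $\inter{\gamma +_{\IR}\gamma'}(h,k) = [\inn_a \circ \inter{(0 \mapsto \gamma;\, 1 \mapsto \gamma')(a)}(h,k)]_{a:\two}$, and under the same identification of a $\two$-fold coproduct with a binary one this is precisely $\inter{\gamma}(h,k) + \inter{\gamma'}(h,k)$; hence the family of isomorphisms built above constitutes a natural isomorphism.

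I do not expect any genuine obstacle: the entire content is the bookkeeping relating the $\two$-indexed $\sigma$-code encoding $+_{\IR}$ to the binary coproduct of families, together with the triviality that these two coproducts are canonically isomorphic --- which is exactly why the statement is phrased with $\cong$ rather than $=$. The only mildly delicate point is to confirm that the canonical bijection on index sets respects the second, functor-valued components of the families and is natural in $C$; both are immediate once the coproduct formula of \cref{rem:Fam} is spelled out.
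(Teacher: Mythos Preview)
Your proposal is correct and is exactly the routine unfolding the paper has in mind; the paper simply omits the proof (marking the lemma with \qed) because the content is precisely the identification of the $\two$-indexed $\sigma$-coproduct with the binary coproduct of families from \cref{rem:Fam}. Your additional check of naturality in $C$ is a nice bonus that the paper leaves implicit.
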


Emulating products of containers requires a little more work. The
basic idea is that we get the product of two codes $\gamma$ and
$\gamma'$ by replacing all occurrences of the terminating code
$\iota\, c$ in the first code $\gamma$ by the second code $\gamma'$,
where, in turn, we replace all codes $\iota\,c'$ with $\iota\,(c
\times c')$. In general, we can replace $\iota\,c'$ with $\iota\,G(c,
c')$ for a functor $G : \C \times \C \to \C$. Formally, we define a
functor $\_[\iota x \longmapsto \iota G(\_, x)] : \pIR \times \C \to
\pIR$ for such a functor $G$ by
\begin{align*}
  (\iota c')[\iota\,x\longmapsto \iota G(c,x)] &= \iota\,G(c, c') \\
  (\sigma_A\, f)[\iota\,x\longmapsto G(c,x)] &= \sigma_A\,(\lambda a\,.\,f(a)[\iota\,x\longmapsto \iota G(c,x)]) \\
  (\delta_A\, F)[\iota\,x\longmapsto G(c,x)] &= \delta_A\,(\lambda h\,.\,F(h)[\iota\,x\longmapsto \iota G(c,x)])
\end{align*}
See the formal development~\cite{agda} for the action on morphisms,
which needs to be defined simultaneously in order to show that
$F(h)[\iota\,x\longmapsto \iota G(c,x)]$ in the $\delta$ case again is a
functor. Using this, we can now define the product $\gamma \times_G
\gamma'$ of two codes with respect to the functor $G$:
\begin{align*}
   (\iota\,c) \times_G \gamma &= \gamma[\iota\,x\longmapsto \iota G(c, x)] \\
   (\sigma_A\,f) \times_G \gamma &= \sigma_A (\lambda a\,.\, f(a) \times_G \gamma) \\
   (\delta_A\,F) \times_G \gamma &= \delta_A (\lambda h\,.\, F(h) \times_G \gamma)
\end{align*}
Again, we need to simultaneosuly show that $\times_G$ is functorial in
order for $F(h) \times \gamma$ in the $\delta$ case to be a functor.

\begin{lemma}
\label{thm:nestTimes}
  $\inter{\gamma \times_G \gamma'}C \cong (\inter{\gamma}_0 C \times \inter{\gamma'}_0 C, (s, s') \mapsto G(\inter{\gamma}_1 C s, \inter{\gamma}_1 C s'))$. \qed
\end{lemma}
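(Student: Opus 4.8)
The plan is to prove the isomorphism by induction on the structure of the first code $\gamma$, following exactly the three-clause definition of $\_\times_G\_$, and appealing to the companion result that describes the substitution operation $\_[\iota\,x\longmapsto\iota G(c,x)]$. First I would isolate, as an auxiliary claim, the statement that
\[
\inter{\gamma[\iota\,x\longmapsto\iota G(c,x)]}C \;\cong\; \Bigl(\inter{\gamma}_0 C,\ s \mapsto G\bigl(c,\ \inter{\gamma}_1 C\, s\bigr)\Bigr),
\]
i.e. that substituting $\iota\,G(c,-)$ for the terminating codes leaves the index set of the interpretation untouched and merely postcomposes the recursive-output family with $G(c,-)$. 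This auxiliary claim is itself proved by a straightforward induction on $\gamma$: the $\iota\,c'$ case is immediate from $\inter{\iota\,G(c,c')}C = (\one,\lambda\_.\,G(c,c'))$ and $\inter{\iota\,c'}C = (\one,\lambda\_.\,c')$; the $\sigma_A f$ and $\delta_A F$ cases follow by unfolding the definition of $\inter{-}$ on objects from \cref{thm:IR-functors}, pushing the coproduct $\sum_{a:A}$ (resp. $\sum_{g:A\to X}$) past the isomorphism using the induction hypothesis, and using that reindexing a coproduct family $[P_a]_{a:A}$ through $G(c,-)$ is the coproduct of the reindexed families.

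With the auxiliary claim in hand, the main induction on $\gamma$ in Lemma \ref{thm:nestTimes} is short. In the base case $\gamma = \iota\,c$ we have $\gamma\times_G\gamma' = \gamma'[\iota\,x\longmapsto\iota G(c,x)]$, and the auxiliary claim gives $\inter{\gamma'[\iota\,x\longmapsto\iota G(c,x)]}C \cong (\inter{\gamma'}_0 C,\ s'\mapsto G(c,\inter{\gamma'}_1 C\,s'))$; since $\inter{\iota\,c}_0 C = \one$ and $\inter{\iota\,c}_1 C\,\ast = c$, this is precisely $(\inter{\gamma}_0 C\times\inter{\gamma'}_0 C,\ (s,s')\mapsto G(\inter{\gamma}_1 C\,s,\inter{\gamma'}_1 C\,s'))$ up to the canonical iso $\one\times Z\cong Z$. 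For $\gamma = \sigma_A f$ we compute $\inter{\sigma_A(\lambda a.\,f(a)\times_G\gamma')}C = \sum_{a:A}\inter{f(a)\times_G\gamma'}C$, apply the induction hypothesis to each summand, and then commute the coproduct $\sum_{a:A}$ with the product $(-)\times\inter{\gamma'}_0 C$ — using that $\Fam(\C)$'s coproducts are computed on index sets and $(\sum_{a:A} X_a)\times Y\cong\sum_{a:A}(X_a\times Y)$ in $\Set$ — and check that the recursive-output families match under this reshuffling. The case $\gamma = \delta_A F$ is identical in spirit: $\inter{\delta_A(\lambda h.\,F(h)\times_G\gamma')}C = \sum_{g:A\to X}\inter{F(Q\circ\ldots)\times_G\gamma'}C$ (with $C = (X,Q)$), apply the induction hypothesis, and redistribute the coproduct over $A\to X$ against the product with $\inter{\gamma'}_0 C$.

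The main obstacle is bookkeeping rather than conceptual: one must track the second (recursive-output) component of each family carefully through the chain of isomorphisms, since the $G$ applied to the pair $(\inter{\gamma}_1 C\,s, \inter{\gamma'}_1 C\,s')$ in the statement only emerges after the auxiliary substitution lemma has done the work of inserting one argument of $G$, and the outer induction supplies the other. A secondary subtlety is that all of these isomorphisms must be natural in $C$ — the lemma is implicitly an isomorphism of functors $\cont\to\cont$ — so strictly one should check that the index-set reshuffling isomorphisms and the $G$-postcomposition are natural transformations; this is routine because every step is built from canonical isos in $\Set$ and from functoriality of $G$, $\inter{\gamma}$, and $\inter{\gamma'}$, but it is the point where the companion claim about the action of $\_[\iota\,x\longmapsto\iota G(c,x)]$ and $\_\times_G\_$ on morphisms (deferred to the formal development \cite{agda}) is genuinely used.
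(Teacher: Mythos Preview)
Your proposal is correct and follows the natural structural induction on $\gamma$ dictated by the three-clause definition of $\_\times_G\_$, with the auxiliary lemma for the substitution operator handling the base case; the paper itself omits the proof entirely (the lemma is marked \qed and the details are deferred to the Agda formalisation~\cite{agda}), but your outline is exactly what one expects that formal proof to unfold to.
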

In particular, if we choose $G(X, Y) = X + Y$, we recover the
container product.

Finally, we get to container composition. Composition of $\pIR$ codes
(and $\IR$ codes) is an open problem in general, but since we are
interested in emulating composition of containers, one could hope that
there is more structure to be exploited, and this is indeed the
case. The main insight is that all codes in the image of the
translation are \emph{uniform}, in the sense of unpublished work by
Peter Hancock. Intuitively, a $\pIR$ code is uniform if the
\emph{shape} of the code (i.e. $\sigma$/$\delta$/$\iota$ followed by
$\sigma$/$\delta$/$\iota$ followed by\ldots) is independent of the
arguments; e.g. $\sigma_A\,(\lambda x\,.\,\sigma_{B(x)}\,(\lambda
y\,.\,\delta_{C(x, y)}\,(\lambda z\,.\,\iota\,d(x, y, z))))$ is
uniform, while the code $\sigma_{\nattype}\,(\lambda x\,.\,\mathsf{if}\; x
= 17\;\mathsf{then}\; \delta_{B}(\lambda y\,.\,\iota\,
c(y))\;\mathsf{else}\; \sigma_{C(x)}\,(\lambda y\,.\,\delta_{D(x,
  y)}(\lambda z\,.\,\iota\,d(x, y, z))$ is not, since the shape is
sometimes $\sigma$-$\delta$-$\iota$ and sometimes
$\sigma$-$\sigma$-$\delta$-$\iota$. A precise description and study of
uniform $\pIR$ codes is out of scope of this paper; for further
information, we refer to our formal development~\cite{agda}.

The main construction that uniform $\pIR$ codes allow, while arbitrary
codes seem not to, is to construct a code for exponentiation of a
$\pIR$ code $\gamma$ with a set $K$, i.e. a code $K \to \gamma$ such
that $\inter{K \to \gamma}C \cong (K \to \inter{\gamma}_0C, f \mapsto
\Sigma k : K\,.\,\inter{\gamma}_1C(f(k)))$. Note how there is a sigma
type in the decoding of the family; as we have seen in
\Cref{ex:sigma-again}, families closed under $\Sigma$ are canonical
examples of a $\pIR$ construction. Since the construction of $K \to
\gamma$ depends on the definition of uniform codes, we do not give it
here, but refer again to our Agda formalisation~\cite{agda}, where we
also show that all constructions so far in this section result in
uniform codes (except for the coproduct of codes, whose construction
must be modified slightly). Given such a construction, we can now
interpret also container composition with nested functors as a $\pIR$
code by defining $\bullet : \Nest \to \pIR \to \pIR$ in the following
way:
\begin{align*}
  \Id \bullet \gamma &= \delta_1\,(\lambda X\,.\,X(\ast) \to \gamma) \\
  K\,(S, P) \bullet \gamma &= \sigma_S\,(\lambda s\,.\,P(s) \to \gamma) \\
  (N + N') \bullet \gamma &= (N \bullet \gamma) +_{\IR} (N' \bullet \gamma) \\
  (N \times N') \bullet \gamma &= (N \bullet \gamma) \times_{+} (N' \bullet \gamma) \\
  (N \circledast N') \bullet \gamma &= N  \bullet (N' \bullet \gamma)
\end{align*}

 \begin{lemma}
\label{thm:nestComp}
   $\inter{N \bullet \gamma} C = (\Int{N}_{\cont} C)
   \circ \inter{\gamma}C$. \qed
\end{lemma}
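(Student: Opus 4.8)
The plan is to prove the statement by induction on the nested code $N : \Nest$, keeping $\gamma : \pIR$ and $C : \cont$ universally quantified, and in each case unfolding the definition of $\bullet$, decoding the resulting code via the $\pIR$ semantics, and comparing with the definition of container composition $(S,P)\circ(S',P')=(\Sigma s:S.\,(P(s)\to S'),\ (s,f)\mapsto \Sigma p:P(s).\,P'(f(p)))$. Exactly as in \cref{thm:nestPlus,thm:nestTimes}, the displayed equalities should be read up to the canonical container isomorphisms, and no type isomorphism beyond reassociation of $\Sigma$-types, currying, $\one\to A\cong A$, and $(A+B)\to C\cong (A\to C)\times(B\to C)$ will be needed.

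For the two base cases the workhorse is the decoding of the exponentiation $K\to\gamma$ of a (uniform) code $\gamma$ by a set $K$, namely $\inter{K\to\gamma}C\cong(K\to\inter{\gamma}_0C,\ f\mapsto\Sigma k:K.\,\inter{\gamma}_1C(f(k)))$. When $N=K(S,P)$ we have $K(S,P)\bullet\gamma=\sigma_S(\lambda s.\,P(s)\to\gamma)$, whose decoding is the $S$-indexed coproduct $\sum_{s:S}\inter{P(s)\to\gamma}C$ in $\Fam(\Dual{\Set})$; substituting the exponentiation formula and forming this coproduct of families reproduces the container composition $(S,P)\circ\inter{\gamma}C$, which by \cref{thm:nestIdK}(ii) is $(\Int{K(S,P)}_{\cont}C)\circ\inter{\gamma}C$. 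When $N=\Id$ we have $\Id\bullet\gamma=\delta_{\one}(\lambda X.\,X(\ast)\to\gamma)$, which on $C=(S,P)$ decodes to $\sum_{g:\one\to S}\inter{P(g(\ast))\to\gamma}C$; since $\one\to S\cong S$ this coproduct is $\sum_{s:S}\inter{P(s)\to\gamma}C$, i.e.\ $C\circ\inter{\gamma}C=(\Int{\Id}_{\cont}C)\circ\inter{\gamma}C$ by \cref{thm:nestIdK}(i).

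For $N=N_1+N_2$ and $N=N_1\times N_2$ I would combine the induction hypothesis with \cref{thm:nestPlus}, resp.\ \cref{thm:nestTimes} with $G(X,Y)=X+Y$, and with the fact that container composition distributes on the left over container coproduct and product, $(D+D')\circ E\cong(D\circ E)+(D'\circ E)$ (distributivity of $\Sigma$ over $+$) and $(D\times D')\circ E\cong(D\circ E)\times(D'\circ E)$ (from $(A+B)\to C\cong(A\to C)\times(B\to C)$). Concretely, $\inter{(N_1+N_2)\bullet\gamma}C=\inter{(N_1\bullet\gamma)+_{\IR}(N_2\bullet\gamma)}C\cong\inter{N_1\bullet\gamma}C+\inter{N_2\bullet\gamma}C$ by \cref{thm:nestPlus}, and two applications of the induction hypothesis followed by left distributivity give $\bigl((\Int{N_1}_{\cont}C)+(\Int{N_2}_{\cont}C)\bigr)\circ\inter{\gamma}C=(\Int{N_1+N_2}_{\cont}C)\circ\inter{\gamma}C$; the product case is identical with $+$ replaced by $\times$ throughout. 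The composition case $N=N_1\circledast N_2$ is the cleanest: $(N_1\circledast N_2)\bullet\gamma=N_1\bullet(N_2\bullet\gamma)$ by definition, so applying the induction hypothesis first to $N_1$ with the code $N_2\bullet\gamma$ and then to $N_2$ with the code $\gamma$ gives $\inter{N_1\bullet(N_2\bullet\gamma)}C=(\Int{N_1}_{\cont}C)\circ\bigl((\Int{N_2}_{\cont}C)\circ\inter{\gamma}C\bigr)$, and associativity of container composition rewrites this as $\bigl((\Int{N_1}_{\cont}C)\circ(\Int{N_2}_{\cont}C)\bigr)\circ\inter{\gamma}C=(\Int{N_1\circledast N_2}_{\cont}C)\circ\inter{\gamma}C$.

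The main obstacle is not any single case but the machinery that makes $\bullet$ well-defined in the first place: the exponentiation $K\to\gamma$, and hence $\bullet$, is only available for \emph{uniform} codes, so the induction above must be run simultaneously with the invariant that $N\bullet\gamma$ is uniform whenever $\gamma$ is. This in turn rests on $K\to-$, $\times_{+}$, and a (slightly modified) coproduct of codes each preserving uniformity, and on the decoding formula for $K\to\gamma$ itself --- all of which presuppose a precise definition of uniform $\pIR$ codes. These closure properties are where the genuine work lies, and for their details we refer to the formal development~\cite{agda}; granting them, the equational content of the lemma is exactly the routine structural induction sketched here (and naturality in $C$, if desired, follows by the same bookkeeping).
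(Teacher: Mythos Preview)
Your proposal is correct and is the natural structural induction on $N$; the paper itself omits the proof entirely (the lemma is stated with a terminal \qed and the details are deferred to the Agda formalisation~\cite{agda}), so there is no alternative argument in the text to compare against. Your sketch is exactly what one would expect the formal development to contain, including the correct identification of the uniformity invariant as the real technical content underlying the well-definedness of $\bullet$ and of $K\to\gamma$.
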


Putting everything together, we arrive at the main theorem of this
section.

\begin{theorem}
  For every $N : \Nest$, the initial algebra $\mu \Int{N}$ exists and
  is a container functor.
\end{theorem}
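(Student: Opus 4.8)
The plan is to reduce the statement to the existence theorem for initial algebras of $\pIR$ functors from \cref{sec:initial-alg}. First I would construct, by structural recursion on $N : \Nest$, a \emph{uniform} code $\gamma_N : \pIR(\Dual{\Set})$ together with a natural isomorphism $\inter{\gamma_N} \cong \Int{N}_{\cont}$ of endofunctors on $\cont = \Fam(\Dual{\Set})$. The base cases are exactly \cref{thm:nestIdK}: take $\gamma_{\Id} = \delta_{\one}((X : \one \to \Dual{\Set}) \mapsto \iota(X \ast))$ and $\gamma_{K(S,P)} = \sigma_{S}(s \mapsto \iota\,P(s))$. For the inductive cases I would set $\gamma_{N_0 + N_1} = \gamma_{N_0} +_{\IR} \gamma_{N_1}$ (using the uniform variant of the coproduct of codes), $\gamma_{N_0 \times N_1} = \gamma_{N_0} \times_{+} \gamma_{N_1}$, and $\gamma_{N_0 \circledast N_1} = N_0 \bullet \gamma_{N_1}$, obtaining the required isomorphism by pasting the isomorphisms from the induction hypothesis with those of \cref{thm:nestPlus}, of \cref{thm:nestTimes} instantiated at $G(X, Y) = X + Y$, and of \cref{thm:nestComp}, respectively. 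One must carry along the invariant that each $\gamma_N$ is uniform, since the construction $K \to \gamma$ hidden inside $\bullet$ is only available for uniform codes; this is precisely why the uniform coproduct of codes is needed.

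Given $\gamma_N$, \cref{thm:p-IR-functor}(i) makes $\inter{\gamma_N}$ an endofunctor on $\cont$, and the main theorem of \cref{sec:initial-alg} equips it with an initial algebra. Transporting along $\inter{\gamma_N} \cong \Int{N}_{\cont}$, the functor $\Int{N}_{\cont}$ also has an initial algebra $(\mu_{\cont}, \inn_{\cont})$, where $\mu_{\cont}$ is a container and, by Lambek's lemma, $\inn_{\cont}$ is an isomorphism.

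It remains to move this initial algebra along the embedding $\interC{-} : \cont \to (\Set \to \Set)$. The commuting square of the preceding proposition says $\interC{-} \circ \Int{N}_{\cont} = \Int{N} \circ \interC{-}$, so $\interC{\inn_{\cont}} : \Int{N}(\interC{\mu_{\cont}}) = \interC{\Int{N}_{\cont}\,\mu_{\cont}} \to \interC{\mu_{\cont}}$ is an $\Int{N}$-algebra structure on $\interC{\mu_{\cont}}$, again an isomorphism; so $\interC{\mu_{\cont}}$ is a fixed point of $\Int{N}$ and, in particular, a container functor. The substantive point is to see that $(\interC{\mu_{\cont}}, \interC{\inn_{\cont}})$ is initial among \emph{all} $\Int{N}$-algebras, and not merely among those whose carrier is a container functor --- for the latter, full-and-faithfulness of $\interC{-}$ together with the square is enough. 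My plan for this is to revisit the transfinite initial-algebra chain: $\Int{N}$ is an accessible endofunctor of $(\Set \to \Set)$, being built from finitely many container functors (each of which is accessible) using coproduct, product and composition, so $\mu\Int{N}$ exists and is the colimit of the chain $\Int{N}^{\alpha}(0)$ started from the initial functor; this chain is the image under $\interC{-}$ of the corresponding chain $\Int{N}_{\cont}^{\alpha}(0, !)$ in $\cont$, and the only thing to verify at limit stages is that $\interC{-}$ preserves the colimits occurring in \emph{this} chain. It does: a further structural recursion, of the same flavour as in \cref{thm:nestComp}, shows that the transition maps in the chain are isomorphisms on positions (equivalently, the set of positions over a fixed shape stabilises), so the limit computed in $\cont$ and the pointwise colimit in $(\Set \to \Set)$ coincide. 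This identifies $\mu\Int{N}$ with $\interC{\mu_{\cont}}$, which is a container functor.

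The step I expect to be the main obstacle is exactly this last one: reconciling the ``small'', category-theoretic initial algebra $\mu_{\cont}$ in $\cont$ with the honest initial algebra $\mu\Int{N}$ in $(\Set \to \Set)$. The difficulty is that $\interC{-}$, although full and faithful, is \emph{not} cocontinuous --- it does not preserve filtered colimits in general --- so the identification cannot be read off a formal preservation statement and really uses the special shape of the codes $\gamma_N$ (their uniformity, and the fact that the initial-algebra chain never merges positions). An alternative that sidesteps the chain is to build the unique algebra homomorphism $\interC{\mu_{\cont}} \to Z$, for an arbitrary $\Int{N}$-algebra $(Z, f)$, straight from the initiality of $\mu_{\cont}$: one ``covers'' $Z$ by a (possibly large) container $C_Z = (\Sigma A : \Set.\,Z(A),\ (A, z) \mapsto A)$, equipped with the evident natural transformation $\interC{C_Z} \to Z$ and with the $\Int{N}_{\cont}$-algebra structure induced from $f$ through the square, and then reads off the homomorphism and its uniqueness; but this presupposes that the initial algebra produced in \cref{sec:initial-alg} stays initial against large containers, which again sends one back to the model construction. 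In either form this is the part of the argument that genuinely has to be argued, rather than merely assembled from the lemmas.
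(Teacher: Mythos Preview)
Your overall structure matches the paper's: build a $\pIR(\Dual{\Set})$ code for $\Int{N}_{\cont}$ from \cref{thm:nestIdK,thm:nestPlus,thm:nestTimes,thm:nestComp}, invoke the initial-algebra theorem of \cref{sec:initial-alg} to obtain a container $(S_N,P_N)$, and then transport along $\interC{-}$ using the commuting square. You also correctly locate the only real work in that last step.

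Where you diverge is in \emph{how} the transport is done. The paper does not separately establish accessibility of $\Int{N}$ on $\Set \to \Set$, nor does it run a bespoke structural recursion on $N$ to see that the transition maps of the chain are isomorphisms on positions. Instead it invokes two general facts: Abbott's results that $\interC{-}$ preserves initial objects and filtered colimits of \emph{cartesian} container morphisms, and \cref{thm:chain-splitting}, which says that the initial chain of \emph{any} $\pIR$ functor consists of split cartesian morphisms. Those two observations together immediately give $\interC{(S_N,P_N)} = \mu\Int{N}$: the image under $\interC{-}$ of the initial chain in $\cont$ is a chain in $\Set \to \Set$ whose colimit is preserved, hence is the initial $\Int{N}$-algebra. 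No independent existence proof for $\mu\Int{N}$ is needed, and nothing is specific to the grammar $\Nest$.

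Your route would go through, but it re-derives for the particular codes $\gamma_N$ what \cref{thm:chain-splitting} already supplies uniformly, and it reproves by hand the colimit-preservation that Abbott already gives for cartesian diagrams. The accessibility argument for $\Int{N}$ as a higher-order endofunctor on $\Set \to \Set$ is also less innocuous than you make it sound; the paper's approach sidesteps it entirely. Your alternative via a large-container cover is, as you note yourself, not self-contained and is not the path taken.
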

\begin{proof}
  By \cref{thm:nestIdK,thm:nestPlus,thm:nestTimes,thm:nestComp},
  $\Int{N}_{\cont}$ is an $\pIR$ functor. Hence by the results in
  \cref{sec:initial-alg}, it has an initial algebra which is a
  container $(S_N,P_N)$.  Since $\interC{-}$ preserves initial objects
  and filtered colimits of cartesian morphisms (Abbott~\cite{Abb},
  Propositions 4.5.1 and 4.6.7) and we know from
  \cref{thm:chain-splitting} in \cref{sec:initial-alg} that the
  initial algebra chain of an $\pIR$ functor is made from cartesian
  morphisms only, we can conclude that $\interC{(S_N,P_N)} = \mu
  \Int{N}$, showing that all nested types indeed are definable using
  containers.
\end{proof}

\section{Comparison to Plain \IR}\label{sec:oldIR}

We now investigate the relationship between $\pIR$ and $\IR$. On the
one hand we show in \cref{thm:IR-to-pos-IR} how to embed Dybjer and
Setzer's original coding scheme for $\IR$ into $\pIR$; this way we can
see $\IR$ as a subsystem of $\pIR$. On the other hand we show in
\cref{thm:pos-IR-to-IR} that on discrete categories, the two schemas
agrees having the same functorial interpretation; thus, using the
canonical embedding of the discretisation of a category into itself,
we can build a functor mapping $\pIR$ into $\IR$.

Note that every type $\D$ can be regarded as a discrete category,
which we by abuse of notation denote $|\D|$.  In the other direction,
every category $\C$ gives rise to a type $|\C|$ whose elements are the
objects of $\C$.
\begin{proposition}
\label{thm:IR-to-pos-IR}
There is a function $\varphi:\IR(\D) \to \pIR\,|\D|$ such that
\[
\inter{\varphi(\gamma)}_{\pIR\,|\D|} = \inter{\gamma}_{\IR(\D)}
\]
\end{proposition}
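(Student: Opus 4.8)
The plan is to define $\varphi : \IR(\D) \to \pIR\,|\D|$ by structural recursion on the $\IR$ code, observing that since $|\D|$ is discrete, the two grammars of codes are almost syntactically identical --- the only mismatch is that a $\delta$ code in $\pIR$ requires its argument $F : (A \to |\D|) \to \pIR\,|\D|$ to be a \emph{functor}, whereas in $\IR$ it is merely a function. But $A \to |\D|$ is itself a discrete category (a product of discrete categories), so \emph{any} function out of it is automatically functorial, with the only morphisms being identities. Hence the recursion is the obvious one:
\begin{align*}
  \varphi(\iota\,d) &= \iota\,d \\
  \varphi(\sigma_A\,f) &= \sigma_A\,(\lambda a\,.\,\varphi(f(a))) \\
  \varphi(\delta_A\,F) &= \delta_A\,(\lambda g\,.\,\varphi(F(g)))
\end{align*}
where in the last clause the action of the functor $\lambda g\,.\,\varphi(F(g))$ on morphisms is forced: since $A \to |\D|$ is discrete, every morphism is an identity, and we send it to the corresponding identity $\pIR$ morphism (which exists by \cref{lem:Pos-IR-is-a-cat}). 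One should check that this respects identity and composition, which is immediate because there is nothing to respect.

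Next I would prove the semantic equation $\inter{\varphi(\gamma)}_{\pIR\,|\D|} = \inter{\gamma}_{\IR(\D)}$ by induction on $\gamma$. On objects this is trivially true, since \cref{thm:p-IR-functor} explicitly reuses the object-level clauses of \cref{thm:IR-functors} verbatim, and the inductive hypothesis handles the recursive calls inside the $\sigma$ and $\delta$ clauses. The only real content is the action on morphisms of the $\delta$ case. Here the $\pIR$ semantics adds an extra factor $\inter{F_{\to}(g^{*}(k))}_{(X,P)}$ compared to the $\IR$ semantics. But on the discrete category $|\D|$, any morphism $(h, k) : (X,P) \to (Y,Q)$ in $\Fam\,|\D|$ has $k = \id$ and $Q \circ h = P$ (this is the defining feature of $\Fam\,|\D|$ as the split cartesian fragment), so the reindexed natural transformation $g^{*}(k)$ is an identity; and since $F_{\to}$ (the action of $\varphi(F)$ on morphisms) sends identities to identities by construction, the extra factor $\inter{F_{\to}(g^{*}(\id))}_{(X,P)}$ is the identity. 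The $\pIR$ $\delta$-clause therefore collapses exactly to the $\IR$ $\delta$-clause $[\inn_{h\circ g}\circ\inter{F(Q \circ h \circ g)}(h,\id)]_{g : A \to X}$ from \cref{thm:IR-functors}, and the induction hypothesis closes the case. The $\iota$ and $\sigma$ cases are immediate.

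The statement as written asks for a genuine \emph{equality} of functors, not merely an isomorphism, so some care is needed that every clause matches on the nose and not just up to coherent iso; this is where I would spend the most attention, but it goes through because on discrete base categories all the relevant coherence data (the $k$-components of morphisms, the functorial action of $F$) is trivially the identity. I would expect the main --- and only slightly delicate --- obstacle to be setting up the functor $\lambda g\,.\,\varphi(F(g)) : (A \to |\D|) \to \pIR\,|\D|$ precisely enough that its action on (identity) morphisms is literally the $\pIR$ identity morphism, so that the argument that the extra $F_{\to}$ factor vanishes is a definitional unfolding rather than a further appeal to functoriality; once that bookkeeping is in place, the rest is routine structural induction.
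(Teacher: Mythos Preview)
Your proposal is correct and follows essentially the same route as the paper: define $\varphi$ by the obvious structural recursion, use discreteness of $A \to |\D|$ to upgrade $\varphi \circ F$ to a functor in the $\delta$ case, and then verify the semantic equation by observing that the extra factor $\inter{F_{\to}(g^{*}(k))}$ collapses to an identity because $k = \id$ in $\Fam\,|\D|$. The paper's proof is terser (it only spells out the $\delta$ case and justifies the vanishing of the extra factor by appealing to functoriality of $\inter{(\varphi \circ F)(-)}_{(X,P)}$ rather than by unfolding the definition), but the content is the same.
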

\begin{proof}
  The only interesting case is $\gamma = \delta_A\, F:\IR(\D)$; we
  define $\varphi(\delta_A\, F) = \delta_A\,(\varphi \circ F)$.  We
  need to ensure that $\varphi \circ F$ indeed is a functor, but since
  $|\D|$ is a discrete, so is $A \to |\D|$, and the mapping on objects
  $\varphi \circ F : (A \to |\D|) \to \pIR\,|\D|$ can trivially be
  extended to a functor $(A \to |\D|) \to \pIR\,|\D|$. It is easy to
  see that the two semantics do agree: on objects, the action is the
  same, and if $(h, \id)$ is a morphism in $\Fam\,|\D|$, we see from
  the definition of $\inter{\delta_A\, F}_{\pIR\,|\D|}(h, k)$ in the
  proof of \cref{thm:p-IR-functor} that
  \begin{align*}
       \inter{\delta_A\, (\varphi \circ F)}_{\pIR\,|\D|}(h, \id)
     &=
       [\inn_{h\circ g}\circ\inter{\varphi(F(Q\circ h\circ g)}(h, \id)
        \circ \inter{(\varphi \circ F)_{\to}(g^{*}(\id))}_{(X,P)}]_{g\,:A\to X} \\
        &=
        [\inn_{h\circ g}\circ\inter{\varphi(F(Q\circ h\circ g)}(h, \id) ]_{g\,:A\to X} \\
        &=
        [\inn_{h\circ g}\circ\inter{F(Q\circ h\circ g)}_{\IR(\D)}(h, \id) ]_{g\,:A\to X} \\
     &= \inter{\delta_A\, F}_{\IR(\D)}
  \end{align*}
  where $\inter{(\varphi \circ F)_{\to}(g^{*}(\id))}_{(X, P)} = \id$
  since $\inter{(\varphi \circ F)(-)}_{(X, P)}$ is a functor.
\end{proof}
This proposition shows that the theory of $\IR$ can be embedded in the
theory of $\pIR$. Some readers might perhaps be surprised that we only
define a function $\IR(\D) \to \pIR\,|\D|$, and not a functor. The
reason is the mismatch of morphisms between $\IR(\D)$ and
$\pIR\,|\D|$; because $\IR(\D)$ has a full and faithful embedding into
$\Fam\,|\D| \to \Fam\,|\D|$, whereas $\pIR\,|\D|$ has not, there are
necessarily morphisms in $\IR(\D)$ that have no counterpart in
$\pIR\,|\D|$. Going the other way, we are more successful, and can
make the previous result more precise: using the functoriality of
$\Fam$ (\Cref{rem:Fam}), we can embed $\Fam\,|\C|$ into $\Fam(\C)$. We
can then show that forgetting about the extra structure of $\C$ in
$\pIR$ simply gets us back to plain $\IR$.

\begin{proposition}
\label{thm:pos-IR-to-IR}
Let $\varepsilon : |\C|\to \C$ the canonical embedding of the
discretisation of a category $\C$ into itself. There is a functor
$\psi \,: \pIR\C \to \IR\,|\C|$ such that for all $\gamma : \pIR\C$
  \begin{equation}
    \label{eq:star}
\Fam(\varepsilon) \circ \inter{\psi(\gamma)}_{\IR\,|\C|} \cong \inter{\gamma}_{\pIR(\C)} \circ \Fam(\varepsilon) \tag{$\star$}
\end{equation}
Furthermore, $\psi \circ \varphi = \id$, where $\varphi:\IR(\C)\to
\pIR\,|\C|$ is the function from
\cref{thm:IR-to-pos-IR}. 
\end{proposition}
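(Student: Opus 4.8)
The plan is to define $\psi$ by recursion on the structure of $\pIR(\C)$ codes (and, simultaneously, of morphisms), and then to establish the displayed isomorphism $(\star)$ and the equation $\psi \circ \varphi = \id$ by structural inductions mirroring the proof of \cref{thm:IR-to-pos-IR}. On codes I would set $\psi(\iota\,c) = \iota\,c$ --- recalling that an object $c$ of $\C$ is exactly an element of the type $|\C|$ --- together with $\psi(\sigma_A\,f) = \sigma_A\,(\psi \circ f)$ and $\psi(\delta_A\,F) = \delta_A\,(\psi \circ F)$. The last clause type checks because the objects of the functor category $A \to \C$ are precisely the functions $A \to |\C|$, so the object part of $F$ is already a map $(A \to |\C|) \to \pIR(\C)$; post-composing with $\psi$ yields a map $(A \to |\C|) \to \IR\,|\C|$, and since the category $A \to |\C|$ is discrete no naturality side-condition remains. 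On morphisms $\psi$ is defined by the analogous clauses, and that it preserves identities and composition is then three routine inductions over the morphism grammar, using \cref{lem:Pos-IR-is-a-cat} and the fact that $\IR\,|\C|$ is a category.

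For $(\star)$ I would induct on $\gamma : \pIR(\C)$, showing that $\Fam(\varepsilon) \circ \inter{\psi(\gamma)}_{\IR\,|\C|}$ and $\inter{\gamma}_{\pIR(\C)} \circ \Fam(\varepsilon)$ are naturally isomorphic as functors $\Fam\,|\C| \to \Fam(\C)$. The cases $\gamma = \iota\,c$ and $\gamma = \sigma_A\,f$ are immediate once one observes that $\Fam(\varepsilon)(X,P) = (X, \varepsilon \circ P)$ preserves the terminal family $(\one, \lambda\_.c)$ and, on the nose, all set-indexed coproducts of families, because $\varepsilon$ is the identity on objects (here $\Fam(\varepsilon)$ comes from functoriality of $\Fam$, \cref{rem:Fam}). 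The case $\gamma = \delta_A\,F$ is the heart of the matter. On objects it follows from these observations together with the induction hypothesis. On morphisms the crucial point is that $\Fam(\varepsilon)$ sends every morphism of $\Fam\,|\C|$ to a split cartesian morphism $(h,\id)$ of $\Fam(\C)$: the fibre component of a morphism of $\Fam\,|\C|$ consists of identities, since $|\C|$ is discrete, and $\varepsilon$ preserves these. Hence, when the $\pIR$-semantics of $\delta_A\,F$ from \cref{thm:p-IR-functor} is evaluated at a morphism in the image of $\Fam(\varepsilon)$, the extra factor $\inter{F_{\to}(g^{*}(k))}_{(X,P)}$ degenerates to $\inter{F_{\to}(\id)}_{(X,P)} = \id$, and the defining formula collapses to exactly the $\IR$-semantics of $\delta_A\,(\psi\circ F)$ --- precisely the phenomenon already exploited in the proof of \cref{thm:IR-to-pos-IR}. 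Threading the coherent family of isomorphisms supplied by the induction hypothesis through this computation then delivers the required natural isomorphism.

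Finally, $\psi \circ \varphi = \id$ on $\IR\,|\C|$ is a direct induction on $\gamma$ comparing the two recursion schemes: both $\psi$ and $\varphi$ act as the identity on $\iota$-codes and commute with $\sigma$, while $\psi(\varphi(\delta_A\,F)) = \psi(\delta_A\,(\varphi\circ F)) = \delta_A\,(\psi\circ\varphi\circ F) = \delta_A\,F$ by the induction hypothesis. I expect the one genuinely delicate point to be the $\delta$-case of $(\star)$: since the code semantics is only pinned down up to isomorphism, the induction hypothesis must be strengthened to carry a coherent choice of natural isomorphisms, and one must check that the collapse of the $F_{\to}$-factor is compatible with this choice. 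Everything else is bookkeeping about coproducts of families and reindexing in the families fibration $\pi : \Fam(\C) \to \Set$.
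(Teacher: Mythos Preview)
Your definition of $\psi$ on codes is essentially the paper's --- the paper writes $\psi(\delta_A F) = \delta_A(X \mapsto \psi(F(\varepsilon \circ X)))$, making the conversion via $\varepsilon$ explicit, which is exactly your observation that objects of $A \to \C$ are functions $A \to |\C|$. Your arguments for $(\star)$ and for $\psi \circ \varphi = \id$ also follow the paper's line, in particular the collapse of the $F_\to$-factor on split cartesian morphisms. One simplification you miss: the two interpretations agree \emph{on the nose} on objects, $\inter{\gamma}(X,P) = \inter{\psi(\gamma)}_{\IR\,|\C|}(X,P)$, so the ``coherent choice of natural isomorphisms'' you worry about is not needed and the $\delta$-case of $(\star)$ is less delicate than you anticipate.

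The genuine divergence, and the gap in your proposal, is the action of $\psi$ on morphisms. You write that ``on morphisms $\psi$ is defined by the analogous clauses'' with ``routine inductions over the morphism grammar'', but $\IR\,|\C|$ has no such grammar: as recalled in \cref{sec:IR}, the hom-sets of $\IR(\D)$ are \emph{defined} so as to make $\inter{-}_{\IR(\D)} : \IR(\D) \to [\Fam\,|\D|, \Fam\,|\D|]$ full and faithful. There are no target-side $\cons{\iota}{\iota}$, $\cons{\sigma}{\sigma}$, $\cons{\delta}{\delta}$ constructors to recurse into. The paper instead exploits precisely this full-and-faithfulness: since $\inter{\gamma}$ and $\inter{\psi(\gamma)}_{\IR\,|\C|}$ agree on objects, the $\pIR$-interpretation $\inter{\rho}$ of a morphism $\rho : \gamma \to \gamma'$ is already a natural transformation $\inter{\psi(\gamma)}_{\IR\,|\C|} \to \inter{\psi(\gamma')}_{\IR\,|\C|}$, hence by fullness an $\IR\,|\C|$-morphism, which is taken as $\psi(\rho)$; faithfulness then gives preservation of identities and composition for free, with no separate inductions. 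What your approach would have bought, had a target grammar existed, is a concrete syntactic description; what the paper's approach buys is that $\psi$ on morphisms and its functoriality come in a single stroke from the semantic definition of $\IR$-hom-sets.
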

\begin{proof}
  We define the functor $\psi\,:\pIR(\C)\to \IR\,|\C|$ by recursion on
  the structure of $\gamma$. On objects, $\psi$ is defined as follows:
\begin{align*}
\psi(\iota\,c)&=\iota\,c\\
\psi(\sigma_{A} f)&=\sigma_A(a\mapsto \psi(f\,a))\\
\psi(\delta_A F)&= \delta_A(X\mapsto \psi(F(\varepsilon\circ X)))
\end{align*} 
We now use full and faithfulness of the interpretation functor
$\inter{\_}_{\IR\,|\C|}$, as proved in Ghani et~al.~\cite{Malatesta},
to let the function $\psi$ act on morphisms as well as on objects.
Since the two interpretation functors agree on objects, i.e.\
$\inter{\gamma}(X,P)=\inter{\psi(\gamma)}_{\IR\,|\C|}(X,P)$, a $\pIR$
morphism $\rho\,:\gamma\to \gamma'$ corresponds to a natural
transformation
$\inter{\rho}\,:\inter{\psi(\gamma)}_{\IR\,|\C|}\stackrel{\cdot}{\to}\inter{\psi(\gamma')}_{\IR\,|\C|}$. By
full and faithfulness of $\inter{\_}_{\IR\,|\C|}$ such a natural
transformation corresponds to an $\IR$ morphism $\psi(\gamma)\to
\psi(\gamma')$ which we take as the definition of
$\psi(\rho)$. Similarly, full and faithfulness of
$\inter{\_}_{\IR\,|\C|}$ ensure that composition and identity are
preserved by $\psi$, which is therefore a functor.
 
We are left with checking that \eqref{eq:star} holds for
morphisms. Recall from \cref{rem:Fam} that a morphism in $\Fam\,|\C|$
correspond to a split cartesian morphism in $\Fam(\C)$, i.e\ one whose
second component is an identity. Thus, to verify \eqref{eq:star} it is
enough to check that $\inter{\gamma}$ preserves such split cartesian
morphisms. The interesting case is $\gamma=\delta_A F$. Let
$(h,\id)\,:(X,P\circ h)\rightarrow (Y,P)$ be a morphism in
$\Fam\,|\C|$.  We have
\begin{align*}
\inter{\delta_A F}(h,\id)
&=[in_{h\circ g}\circ\inter{F(P\circ h\circ g)}(h,\id)\circ \inter{F_{\to}(g^{*}(\id))}_{(X,P)}]_{g\,:A\to X}\\
&=[in_{h\circ g}\circ\inter{F(P\circ h\circ g)}(h,\id) ]_{g\,:A\to X}
\end{align*}
where $\inter{F(g^*\id)}_{(X,P)} = \id$ since $g^*$, $F$ and
$\inter{\_}$ are functors.  By the induction hypothesis, each
$\inter{F(P\circ h\circ g)}(h,\id)$ is split cartesian. Furthermore
injections are split cartesian in $\Fam(\C)$, and since compositions
and cotuplings of split cartesian morphisms are still split
cartesian 
in $\Fam(\C)$ we conclude that $\inter{\delta_A\,F}(h,\id)$ indeed is
a split cartesian morphism as required.
 
Finally, since $\varepsilon$ is the identity on discrete categories
the two schemas agrees on discrete categories and we automatically get
$\psi \circ \varphi = \id$.
\end{proof}

\section{Existence of Initial Algebras}
\label{sec:initial-alg}

We briefly revisit the initial algebra argument used by Dybjer and
Setzer~\cite{DybSetz1}.  Inspecting their proof, we see that it indeed
is possible to adapt it also for the more general setting of positive
inductive-recursive definitions by making the appropriate adjustments.

Remember that a morphism $(h,k)\,:(U,T)\to (U,T')$ in $\Fam(\C)$ is a
\emph{split cartesian morphism} if $k = \id_T$, i.e.\ $T' \circ h =
T$, and that $\Fam\,|\C|$ is the subcategory (subfibration) of
$\Fam(\C)$ with the same objects, but with morphisms the split
cartesian ones only.
The proof of existence of initial algebras for $\IR$ functors as given
by Dybjer and Setzer~\cite{DybSetz1} takes place in the category
$\Fam\,|\C|$.  The hard work of the proof is divided between two
lemmas.  First Dybjer and Setzer prove that an $\IR$ functor
$\inter{\gamma}$ preserves $\kappa$-filtered colimits if $\kappa$ is
an inaccessible cardinal which suitably bounds the size of the index
sets in the image of the filtered diagram.  Secondly they use the
assumption of the existence of a large cardinal, namely a Mahlo
cardinal, to prove that such a cardinal bound for the index sets can
actually be found.  The exact definition of when a cardinal is a Mahlo
cardinal will not be important for the current presentation; see
Dybjer and Setzer~\cite{DybSetz1}, or the second author's
thesis~\cite{Lorenzothesis} for how this assumption is used.  The
existence of an initial algebra then follows a standard argument: the
initial algebra of a $\kappa$-continuous functor can be constructed as
the colimit of the initial chain up to $\kappa$ iterations (see e.g.\
Ad\'amek et al.~\cite{Adamek}).

Inspecting the proofs, we see that they crucially depend on morphisms
being split cartesian in several places.  Luckily, the morphisms
involved in the corresponding proofs for \pIR actually are!  As is
well-known, a weaker condition than $\kappa$-continuity is actually
sufficient: it is enough that the functor in question preserve the
specific colimit of the initial $\kappa$-chain.  We thus show that the
initial chain of a $\pIR$ functor actually lives in $\Fam\,|\C|$,
which will allow us to modify Dybjer and Setzer's proof accordingly.

\begin{lemma}
\label{thm:chain-splitting}
For each $\gamma:\pIR\,\C$, the initial chain  
\[
\zero \to \inter{\gamma}(\zero) \to \inter{\gamma}^{2}(\zero) \to \ldots
\]
consists of split cartesian morphisms only.
\end{lemma}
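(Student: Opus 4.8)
The plan is a transfinite induction along the initial chain, whose only real input is the fact --- established inside the proof of \cref{thm:pos-IR-to-IR} --- that $\inter{\gamma} \colon \Fam(\C) \to \Fam(\C)$ sends split cartesian morphisms to split cartesian morphisms. Recall that a morphism $(h,k)$ of $\Fam(\C)$ is split cartesian exactly when $k$ is an identity, and that $\Fam\,|\C|$ is the wide subcategory of $\Fam(\C)$ cut out by these morphisms. So the statement amounts to showing that the whole initial chain of $\inter{\gamma}$ can be built, and lives, inside $\Fam\,|\C|$.

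First I would dispose of the non-limit stages. The starting morphism $\zero \to \inter{\gamma}(\zero)$ is the unique map out of the initial object of $\Fam(\C)$, which has empty underlying set; its second component is therefore a natural transformation between functors out of the empty category, necessarily the identity, so this map is split cartesian. At a successor stage the new connecting morphism is $\inter{\gamma}$ applied to the previous one, hence split cartesian by the cited preservation property (concretely: for an $\iota$ code the second component of the image is literally an identity; for a $\sigma$ code it is a cotupling of injections with split cartesian morphisms; and for a $\delta$ code the extra factor coming from the functorial action of $F$ collapses to an identity because $F$ and $\inter{-}$ are functors --- this is exactly the computation performed in the proof of \cref{thm:pos-IR-to-IR}). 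Since composites of split cartesian morphisms are again split cartesian, all connecting morphisms below the next limit ordinal are split cartesian.

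The only step that needs genuine care is the limit step. Here the key observation is that a colimit of a chain $((X_\alpha, P_\alpha))_{\alpha<\lambda}$ all of whose connecting morphisms $(h_{\alpha\beta}, \id)$ are split cartesian --- so that $P_\alpha = P_\beta \circ h_{\alpha\beta}$ whenever $\alpha \le \beta$ --- can be computed ``on the base'': the functors $P_\alpha$ glue into a single functor $P_\lambda \colon \varinjlim_{\alpha<\lambda} X_\alpha \to \C$ with $P_\lambda \circ u_\alpha = P_\alpha$ for the coprojections $u_\alpha$, so in particular the colimit exists. I would then check by a short cocone chase that $(\varinjlim_{\alpha<\lambda} X_\alpha, P_\lambda)$ with cocone $((u_\alpha, \id))_{\alpha<\lambda}$ is a colimit in the whole of $\Fam(\C)$, not merely in $\Fam\,|\C|$: for any competing cocone the second components are compatible along the $h_{\alpha\beta}$ precisely because they form a cocone, and hence assemble into the unique mediating natural transformation. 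Consequently both the coprojections into, and the mediating morphisms out of, a limit stage are split cartesian, and the induction goes through; equivalently, the inclusion $\Fam\,|\C| \hookrightarrow \Fam(\C)$ preserves colimits of such chains. I expect this to be the main obstacle, modest as it is: because $\Fam(\C)$ need not be cocomplete in general (see \cref{rem:Fam}), one cannot merely invoke an abstract colimit but must exhibit it explicitly and verify its universal property in $\Fam(\C)$ rather than just in $\Fam\,|\C|$.
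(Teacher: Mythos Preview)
Your proposal is correct and follows essentially the same route as the paper: transfinite induction along the chain, invoking the preservation of split cartesian morphisms established in \cref{thm:pos-IR-to-IR} at successor stages, and arguing at limit stages that the colimit of a split-cartesian diagram is computed in $\Fam\,|\C|$ with a split cartesian cocone. Your treatment of the limit step is in fact more explicit than the paper's --- you construct the colimit on the base and verify its universal property in all of $\Fam(\C)$, whereas the paper simply asserts that ``the colimit lives in $\Fam\,|\C|$ and hence coincides with the colimit in that category on split cartesian morphisms''.
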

\begin{proof}
Recall that the connecting morphisms $\omega_{j,k}:\inter{\gamma}^{j}(\zero)\to \inter{\gamma}^{k}(\zero)$ are uniquely determined as follows:
\begin{itemize}
\item $\omega_{0,1} = {} !_{\inter{\gamma}(\zero)}$ is unique.
\item $\omega_{j+1,k+1}$ is $\inter{\gamma}(\omega_{j, k}) : \inter{\gamma}(\inter{\gamma}^{j}(\zero))\to \inter{\gamma}(\inter{\gamma}^{k}(\zero))$.
\item $\omega_{j,k}$ is the colimit cocone for $j$ a limit ordinal.
\end{itemize}
We prove the statement by induction on $j$.  It is certainly true that
$!_{\inter{\gamma}(\zero)} : (\zero, !) \to \inter{\gamma}(\zero)$ is
an identity at each component --- there are none. Thus $\omega_{0,1}$
is a split cartesian morphism.  At successor stages, we apply
\cref{thm:pos-IR-to-IR} and the induction hypothesis.  Finally, at
limit stages, we use the fact that the colimit lives in $\Fam\,|\C|$
and hence coincides with the colimit in that category on split
cartesian morphisms, so that the colimit cocone is split cartesian.
\end{proof}

Inspecting Dybjer and Setzer's original proof, we see that it now goes
through also for $\pIR$ if we insert appeals to
\cref{thm:chain-splitting} where necessary.  To finish the proof, we
also need to ensure that $\Fam(\C)$ has $\kappa$-filtered colimits;
this is automatically true if $\C$ has all small connected colimits
(compare \Cref{rem:Fam}), since $\Fam(\C)$ then is cocomplete.  Note
that discrete categories have all small connected colimits for trivial
reasons.

\begin{theorem}
  Assume that a Mahlo cardinal exists in the meta-theory. If $\C$ has
  connected colimits, then every functor $\inter{\gamma}$ for $\gamma
  : \pIR\,\C$ has an initial algebra. \qed
\end{theorem}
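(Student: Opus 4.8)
The plan is to run the classical initial-algebra construction of Ad\'amek~\cite{Adamek}: for an endofunctor $F$ on a cocomplete category, if $F$ preserves the colimit of its initial $\lambda$-chain $\zero \to F\zero \to F^{2}\zero \to \cdots$ for some ordinal $\lambda$, then that colimit carries the initial $F$-algebra structure. First I would record that $\Fam(\C)$ is cocomplete: since $\C$ has all small connected colimits, \Cref{rem:Fam} gives that $\Fam(\C)$ has all small colimits, so the initial chain of $\inter{\gamma}$ has a colimit for every ordinal. As observed before the statement, it is in fact enough to find a \emph{single} ordinal $\kappa$ for which $\inter{\gamma}$ preserves the colimit of its initial $\kappa$-chain; everything then reduces to establishing this preservation.

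To obtain it I would transport Dybjer and Setzer's continuity argument through the translation of \Cref{thm:pos-IR-to-IR}. By \Cref{thm:chain-splitting}, every connecting morphism of the initial chain of $\inter{\gamma}$ is split cartesian, so the chain, and its colimit, already live in the subcategory $\Fam\,|\C|$, where a colimit of such morphisms coincides with the colimit computed in $\Fam(\C)$. Via the embedding $\Fam(\varepsilon) : \Fam\,|\C| \to \Fam(\C)$ and the isomorphism $(\star)$ of \Cref{thm:pos-IR-to-IR}, the action of $\inter{\gamma}$ along this chain agrees, up to natural isomorphism, with the action of the ordinary $\IR$ functor $\inter{\psi(\gamma)}_{\IR\,|\C|}$ on the corresponding chain in $\Fam\,|\C|$. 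Hence it suffices that $\inter{\psi(\gamma)}_{\IR\,|\C|}$ preserve the colimit of its initial $\kappa$-chain --- and this is exactly Dybjer and Setzer's first lemma: an $\IR$ functor is $\kappa$-continuous whenever $\kappa$ is inaccessible and large enough to bound all the index sets occurring in the $\sigma$- and $\delta$-codes of $\psi(\gamma)$ together with all sets produced while decoding along the chain.

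The Mahlo cardinal enters only to guarantee that such a $\kappa$ exists: its reflection properties yield an inaccessible $\kappa$ closed under all the set-forming operations triggered by $\psi(\gamma)$, precisely as in Dybjer and Setzer~\cite{DybSetz1}. Putting the pieces together: choose $\kappa$ as supplied by the Mahlo cardinal, form the colimit $\mu$ of the initial $\kappa$-chain of $\inter{\gamma}$ (which exists since $\Fam(\C)$ is cocomplete), use \Cref{thm:chain-splitting}, the isomorphism $(\star)$, and Dybjer and Setzer's continuity lemma to see that $\inter{\gamma}$ preserves this colimit, and invoke Ad\'amek's theorem to conclude that $\mu$ is the carrier of an initial $\inter{\gamma}$-algebra.

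I expect the main obstacle to be this continuity step --- checking that Dybjer and Setzer's two lemmas ($\kappa$-continuity of $\IR$ functors, and the Mahlo-based existence of a bounding $\kappa$) genuinely survive the passage to $\pIR$. The subtle points are that the relevant colimit must be taken in $\Fam\,|\C|$ rather than $\Fam(\C)$, which is the whole reason \Cref{thm:chain-splitting} is needed, and that $(\star)$ only intertwines the $\pIR$ and $\IR$ semantics along split cartesian morphisms, so one must verify that this really is enough to transfer preservation of the initial-chain colimit. The remaining bookkeeping is the standard Ad\'amek argument and requires no modification.
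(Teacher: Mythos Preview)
Your argument is correct and uses the same key ingredient as the paper --- \cref{thm:chain-splitting} --- but packages the continuity step differently. The paper does not pass through the translation $\psi$ and the isomorphism~\eqref{eq:star} of \cref{thm:pos-IR-to-IR}; instead it simply re-runs Dybjer and Setzer's two lemmas ($\kappa$-continuity and the Mahlo-based bound) \emph{directly} for the $\pIR$ functor $\inter{\gamma}$, observing that the only places in their proof where discreteness of $\C$ was used are exactly those where the morphisms involved come from the initial chain, and \cref{thm:chain-splitting} guarantees those are split cartesian. Your route is a bit more modular: you treat Dybjer--Setzer's result as a black box for $\inter{\psi(\gamma)}_{\IR\,|\C|}$ and then transport it back along $\Fam(\varepsilon)$ and~\eqref{eq:star}. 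This buys you the advantage of not having to reopen the original proof, at the price of the extra bookkeeping you flag yourself --- checking that the initial chains of $\inter{\gamma}$ and $\inter{\psi(\gamma)}$ correspond under $\Fam(\varepsilon)$, and that the colimit computed in $\Fam\,|\C|$ agrees with the one in $\Fam(\C)$. Both are handled by \cref{thm:chain-splitting} together with the fact (used inside its proof) that colimits of split cartesian diagrams in $\Fam(\C)$ coincide with those in $\Fam\,|\C|$, so your outline goes through.
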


\section{Conclusions and Future Work}

In this paper we have introduced the theory $\pIR$ of positive
inductive-recursive definitions as a generalization of Dybjer and
Setzer's theory $\IR$ of inductive-recursive
definitions~\cite{DybSetz1,DybSetz2,DybSetz3}, different from the
fibrational generalization explored in Ghani et al.~\cite{GHMNS}: by
modifying both syntax and semantics of $\IR$ we have been able to
broaden the semantics to all of $\Fam(\C)$ and not just
$\Fam\,|\C|$. The theory of $\pIR$, with $\IR$ as a subtheory, paves
the way to the analysis of more sophisticated data types which allow
not only for the simultaneous definition of an inductive type $X$ and
of a recursive function $f:X\to \D$, but also takes the intrinsic
structure between objects in the target type $\D$ into account. This
is the case for example when $\D$ is a setoid, the category $\Set$ or
$\Dual{\Set}$, a groupoid or, even more generally, an arbitrary
category $\C$.

\sloppy In future work we aim to explore the theory of $\pIR$ from a
fibrational perspective: this will allow us to reconcile the theory of
$\pIR$ with the analysis of $\IR$ as given in Ghani~ et~
al.~\cite{GHMNS}. In particular this will amount to characterising the
semantics of $\delta$ codes as left Kan extensions. An open problem
for both $\pIR$ and $\IR$ is the question whether the definable
functors are closed under composition, i.e. if there is a code $\gamma
\circ \gamma'$ such that $\inter{\gamma \circ \gamma'} \cong
\inter{\gamma} \circ \inter{\gamma'}$ for all codes $\gamma$ and
$\gamma'$. Another interesting direction of research is to investigate
to which extent the rich structure of the families construction $\Fam$
will help shed light on the analysis of $\pIR$ types: in particular to
exploit the monadic structure of $\Fam$ and then to investigate the
relationship between the theory of $\pIR$ and the theory of familial
2-functors introduced by Weber~\cite{Weber}.
\label{sec:conclusion}

\bibliography{positive}

\newcommand{\etalchar}[1]{$^{#1}$}
\begin{thebibliography}{GMNFS13}

\bibitem[AAG05]{AGT}
Michael Abbott, Thorsten Altenkirch, and Neil Ghani.
\newblock Containers: Constructing strictly positive types.
\newblock {\em Theoretical Computer Science}, 342(1):3 -- 27, 2005.

\bibitem[Abb03]{Abb}
Michael Abbott.
\newblock {\em Category of Containers}.
\newblock PhD thesis, University of Leicester, 2003.

\bibitem[Acz80]{Aczel}
Peter Aczel.
\newblock Frege structures and the notions of proposition, truth and set.
\newblock In Jon Barwise, H.~Jerome Keisler, and Kenneth Kunen, editors, {\em
  The Kleene Symposium}, volume 101 of {\em Studies in Logic and the
  Foundations of Mathematics}, pages 31 -- 59. Elsevier, 1980.

\bibitem[AMM10]{Adamek}
Jiri Ad{\'a}mek, Stefan Milius, and Lawrence Moss.
\newblock Initial algebras and terminal coalgebras: a survey.
\newblock Draft, June 29 2010.

\bibitem[AMU05]{amu05}
Andreas Abel, Ralph Matthes, and Tarmo Uustalu.
\newblock Iteration and coiteration schemes for higher-order and nested
  datatypes.
\newblock {\em Theoretical Computer Science}, 333(1-2):3--66, 2005.

\bibitem[Bla00]{bla00}
Paul Blampied.
\newblock {\em Structured Recursion for Non-uniform Data-types}.
\newblock PhD thesis, University of Nottingham, 2000.

\bibitem[CJ95]{JoCa}
Aurelio Carboni and Peter Johnstone.
\newblock Connected limits, familial representability and {A}rtin glueing.
\newblock {\em Mathematical Structures in Computer Science}, 5(04):441--459,
  1995.

\bibitem[DS99]{DybSetz1}
Peter Dybjer and Anton Setzer.
\newblock A finite axiomatization of inductive-recursive definitions.
\newblock In {\em Typed lambda calculi and applications: 4th international
  conference, TLCA'99, L'Aquila, Italy, April 7-9, 1999: proceedings}, pages
  129--146. Springer Verlag, 1999.

\bibitem[DS03]{DybSetz2}
Peter Dybjer and Anton Setzer.
\newblock Induction--recursion and initial algebras.
\newblock {\em Annals of Pure and Applied Logic}, 124(1-3):1--47, 2003.

\bibitem[DS06]{DybSetz3}
Peter Dybjer and Anton Setzer.
\newblock Indexed induction--recursion.
\newblock {\em Journal of logic and algebraic programming}, 66(1):1--49, 2006.

\bibitem[Dyb00]{Dyb}
Peter Dybjer.
\newblock A general formulation of simultaneous inductive-recursive definitions
  in type theory.
\newblock {\em Journal of Symbolic Logic}, 65(2):525--549, 2000.

\bibitem[EHA09]{Stevan}
Linus Ek, Ola Holmstr{\"o}m, and Stevan Andjelkovic.
\newblock Formalizing {A}rne {A}ndersson trees and {L}eft-leaning {R}ed-{B}lack
  trees in {A}gda.
\newblock Bachelor thesis, Chalmers University of Technology, 2009.

\bibitem[FPT99]{FTP99}
Marcelo Fiore, Gordon Plotkin, and Daniele Turi.
\newblock Abstract syntax and variable binding.
\newblock In {\em Proc. Logic in Computer Science}, pages 193--202, 1999.

\bibitem[GHM{\etalchar{+}}13]{Malatesta}
Neil Ghani, Peter Hancock, Lorenzo Malatesta, Conor McBride, and Thorsten
  Altenkirch.
\newblock Small induction recursion.
\newblock In {\em TLCA 2013}, 2013.

\bibitem[GHUV06]{ghuv06}
Neil Ghani, Makoto Hamana, Tarmo Uustalu, and Varmo Vene.
\newblock Representing cyclic structures as nested types.
\newblock Presented at Trends in Functional Programming, 2006.

\bibitem[GMNF13]{calco}
Neil Ghani, Lorenzo Malatesta, and Fredrik Nordvall~Forsberg.
\newblock Positive inductive-recursive definitions.
\newblock In {\em CALCO 2013}, 2013.

\bibitem[GMNF14]{agda}
Neil Ghani, Lorenzo Malatesta, and Fredrik Nordvall~Forsberg.
\newblock Positive inductive-recursive definitions: Agda formalisation, 2014.
\newblock
  \url{https://personal.cis.strath.ac.uk/fredrik.nordvall-forsberg/positive_IR/}.

\bibitem[GMNFS13]{GHMNS}
Neil Ghani, Lorenzo Malatesta, Fredrik Nordvall~Forsberg, and Anton Setzer.
\newblock Fibred data types.
\newblock In {\em LICS 2013}, 2013.

\bibitem[Hin00]{hin00}
Ralf Hinze.
\newblock Functional pearl: Perfect trees and bit-reversal permutation.
\newblock {\em Journal of Functional Programming}, 10(3):305--317, 2000.

\bibitem[Jac99]{Jacobs}
Bart Jacobs.
\newblock {\em Categorical Logic and Type Theory}, volume 141 of {\em Studies
  in Logic and the Foundations of Mathematics}.
\newblock North Holland, Elsevier, 1999.

\bibitem[Mal15]{Lorenzothesis}
Lorenzo Malatesta.
\newblock {\em Investigations into Inductive-Recursive Definitions}.
\newblock PhD thesis, University of Strathclyde, 2015.

\bibitem[ML72]{ML0}
Per Martin-L{\"o}f.
\newblock An intuitionistic theory of types.
\newblock Published in Twenty-Five Years of Constructive Type Theory, 1972.

\bibitem[ML84]{ML1}
Per Martin-L{\"o}f.
\newblock {\em Intuitionistic type theory}.
\newblock Bibliopolis Naples, 1984.

\bibitem[MM04]{mm04}
Conor McBride and James McKinna.
\newblock The view from the left.
\newblock {\em Journal of Functional Programming}, 14(1):69--111, 2004.

\bibitem[NF13]{Fredthesis}
Fredrik Nordvall~Forsberg.
\newblock {\em Inductive-inductive definitions}.
\newblock PhD thesis, Swansea University, 2013.

\bibitem[NFS12]{NordvallForsberg}
Fredrik Nordvall~Forsberg and Anton Setzer.
\newblock A finite axiomatisation of inductive-inductive definitions.
\newblock In Ulrich Berger, Diener Hannes, Peter Schuster, and Monika
  Seisenberger, editors, {\em Logic, Construction, Computation}, volume~3 of
  {\em Ontos mathematical logic}, pages 259 -- 287. Ontos Verlag, 2012.

\bibitem[{Uni}13]{hottbook}
The {Univalent Foundations Program}.
\newblock {\em Homotopy Type Theory: Univalent Foundations of Mathematics}.
\newblock \url{http://homotopytypetheory.org/book}, Institute for Advanced
  Study, 2013.

\bibitem[Web07]{Weber}
Mark Weber.
\newblock Familial 2-functors and parametric right adjoints.
\newblock {\em Theory and Applications of Category Theory}, 18(22):665--732,
  2007.

\end{thebibliography}
\bibliographystyle{alpha}

\end{document}